\definecolor{ddgreen}{rgb}{.05,.4,.05}
\definecolor{damethyst}{rgb}{0.4, 0.2, 0.6}
\newtheorem{theorem}{Theorem}
\newtheorem{cor}{Corollary}
\newtheorem{definition}{Definition}
\newtheorem{lem}{Lemma}
\newtheorem{construction}{Construction}
\newcommand{\Tr}[1]{\mbox{$\mathrm{Tr}\Big[#1\Big]$}}
\newcommand{\beq}{\begin{eqnarray}}
\newcommand{\eeq}{\end{eqnarray}}
\newcommand\abs[1]{\left|#1\right|}
\newcommand{\negl}{\mathsf{negl}}
\newcommand{\sym}{\Pi_{\textnormal{sym}}}
\newcommand{\PRS}{\textnormal{PRS}}
\newcommand{\prf}{\mathsf{PRF}}
\definecolor{amethyst}{rgb}{0.6, 0.4, 0.8}
\title{Quantum trapdoor functions from classical one-way functions}
\author{Andrea Coladangelo \thanks{Email: \texttt{coladan@cs.washington.edu}}}
\affil{Paul G. Allen School of Computer Science and Engineering, University of Washington}
\date{}
\begin{document}

\maketitle
\begin{abstract}
We formalize and study the notion of a \emph{quantum trapdoor function}. This is an efficiently computable unitary that takes as input a ``public'' quantum state and a classical string $x$, and outputs a quantum state. This map is such that (i) it is hard to invert, in the sense that it is hard to recover $x$ given the output state (and many copies of the public state), and (ii) there is a classical trapdoor that allows efficient inversion. We show that a quantum trapdoor function can be constructed from any quantum-secure one-way function. A direct consequence of this result is that, assuming just the existence of quantum-secure one-way functions, there exists a public-key encryption scheme with a (pure) \emph{quantum} public key. %, and (ii) a \emph{two-message} key-exchange protocol, assuming an appropriate notion of a quantum authenticated channel.
%By leveraging trapdoor quantum functions, we show that the following primitives can be constructed, with quantum communication, \emph{assuming just the existence of one-way functions}:
%\begin{itemize}
%\item A \emph{two-message} key-exchange protocol.
%\item A ``public-key'' encryption scheme with %quantum public keys.
%\end{itemize}
\end{abstract}

\tableofcontents

%\abstract{A deniable encryption scheme from LWE using quantum information.}
\section{Introduction}
\emph{One-way} functions and \emph{trapdoor} functions are two of the most well-studied cryptographic primitives. The former are functions that are easy to compute but hard to invert on average. The latter satisfy the same property with the additional feature that there exists some special information, a ``trapdoor'', that enables efficient inversion, but without which inversion remains hard. 

In classical cryptography, there is a marked dividing line between one-way functions and trapdoor functions. One-way functions are equivalent to so-called ``minicrypt'' primitives, e.g. pseudorandom generators, pseudorandom functions, commitments and \emph{private-key} encryption. On the other hand, (injective) trapdoor functions imply \emph{public-key} cryptosystems \cite{yao1982theory, goldwasser1982probabilistic}. This dividing line can be made formal: one can provably show that injective trapdoor functions cannot be constructed from black-box use of one-way functions \cite{impagliazzo1989limits, barak2009merkle}. 

In this work, we consider the question of whether the dividing line between one-way and trapdoor functions also stands in a quantum world, i.e.\ a world where parties have access to quantum computation and communication. Concretely, we ask: 
\begin{center}
        {\em Can we achieve some version of a trapdoor function\\ from a one-way function, in a quantum world?}
\end{center}

\subsection{Our results}
We provide a partial positive answer to the question above. Our first contribution is to formalize the notion of a \emph{quantum trapdoor function} (QTF). This notion was introduced informally in \cite{nikolopoulos2008applications}. A QTF is similar to a classical trapdoor function, but differs in two crucial ways:
\begin{itemize}
    \item The generation procedure outputs a classical trapdoor $tr$ and a \emph{quantum} evaluation key $\ket{eval}$.
    \item Evaluation takes as input a classical string $x$, and the evaluation key $\ket{eval}$, and outputs a \emph{quantum} state $\ket{\psi_x}$.
\end{itemize}
For security, we require that (without the trapdoor) it should be hard to recover $x$ from $\ket{\psi_x}$, \emph{even given access to polynomially many copies of} $\ket{eval}$. On the other hand, given the trapdoor $tr$, inversion is easy. We additionally restrict our attention to the case where (the same) $\ket{eval}$ can be efficiently generated given the trapdoor $tr$. The reason for this is that we think of $\ket{eval}$ as being a ``publicly'' available resource: anyone can request copies of $\ket{eval}$ from the party who has the trapdoor.

Our main result is the following.

\begin{theorem}[informal]
\label{thm: 1}
Quantum trapdoor functions exist, assuming the existence of quantum-secure one-way functions.
\end{theorem}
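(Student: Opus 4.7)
My plan is to build a QTF directly from a quantum-secure pseudorandom function $\prf$, which is implied by any post-quantum one-way function (Zhandry). Let the trapdoor $tr$ be a PRF key $k$, and let the evaluation key be the binary-phase state
\[
\ket{eval_k} \;=\; \frac{1}{\sqrt{2^n}} \sum_{y \in \{0,1\}^n} (-1)^{\prf_k(y)} \ket{y},
\]
which is efficiently preparable from $k$ by a layer of Hadamards followed by a PRF-controlled phase. Evaluation on input $x \in \{0,1\}^n$ applies $Z^x$ to a fresh copy of the evaluation key, producing
\[
\ket{\psi_x} \;=\; \frac{1}{\sqrt{2^n}} \sum_y (-1)^{\prf_k(y)+x\cdot y}\ket{y}.
\]
Inversion with the trapdoor is immediate: use $k$ to apply $(-1)^{\prf_k(y)}$ in phase, cancelling the PRF phase; then apply $H^{\otimes n}$ and measure in the computational basis to read off $x$.

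Security splits into a computational and an information-theoretic step. First, the pseudorandom-state property of binary-phase states (Brakerski--Shmueli) lets me replace the $t$ available copies of $\ket{eval_k}$ with $t$ copies of a Haar-random state $\ket{\phi}$, with only negligible loss in any adversary's success probability. After the substitution, the adversary is given $\ket{\phi}^{\otimes t}$ together with one copy of $Z^x\ket{\phi}$ for uniformly random $x$, and must output $x$. The key observation is that $\{Z^x\ket{\phi}\}_x$ is an orthonormal basis, so recovering $x$ is essentially equivalent to approximately learning $\ket{\phi}$, which is impossible from $\poly(n)$ copies of a Haar-random state in dimension $2^n$. I would formalise this by computing the average state $\bar\rho = \E_x\E_\phi\bigl[(\ket{\phi}\bra{\phi})^{\otimes t}\otimes Z^x\ket{\phi}\bra{\phi}Z^x\bigr]$ from the $(t{+}1)$-st Haar moment via Schur--Weyl, bounding $\norm{\rho_x - \bar\rho}_1$ by $\poly(n,t)/2^n$, and then invoking Holevo--Fano to cap the recovery probability at $\negl(n)$.

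The step I expect to be most delicate is this final information-theoretic bound. The twist by $Z^x$ on a single copy destroys the full permutation symmetry of the $t{+}1$ registers, so standard symmetric-subspace estimates do not directly apply; a direct Weingarten-style calculation, combined with the fact that averaging $Z^x(\cdot)Z^x$ over uniform $x$ dephases to the computational basis, should nevertheless deliver the bound. Assuming this goes through, the QTF is established, and the corollary of public-key bit-encryption with a pure quantum public key follows by the standard transformation that combines the QTF with a hard-core predicate for $x$.
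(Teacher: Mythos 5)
Your construction and the outer reduction are identical to the paper's: trapdoor = PRF key, evaluation key = binary-phase PRS state, $\mathsf{Eval}$ applies $Z^x$, $\mathsf{Invert}$ cancels the PRF phase and Hadamards, and security first substitutes Haar-random states for the PRS using the Brakerski--Shmueli indistinguishability. The divergence is in the information-theoretic lemma, and there your plan has a genuine gap.

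Concretely, the claimed bound $\norm{\rho_x - \bar\rho}_1 \le \poly(n,t)/2^n$ is false. Already for $t=1$, using Lemma \ref{lem: haar sym subspace} one has $\rho_x = \tfrac{1}{2^n(2^n+1)}\bigl(I + (Z^x\otimes I)\,\mathrm{SWAP}\,(Z^x\otimes I)\bigr)$, and averaging over $x$ dephases the SWAP to the diagonal projector $\sum_j \ket{jj}\bra{jj}$. The operator $\mathrm{SWAP} - \sum_j\ket{jj}\bra{jj}$ has trace norm $4^n-2^n$, so $\tfrac12\norm{\rho_0 - \bar\rho}_1 = \tfrac12\cdot\tfrac{2^n-1}{2^n+1} \approx \tfrac12$, a constant. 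The individual $\rho_x$'s are in fact pairwise very distinguishable; what is hard is discriminating among all $2^n$ of them simultaneously, and a per-state trace-distance bound cannot capture this. Your Holevo--Fano fallback also falls short: even when the Holevo quantity of the ensemble is $O(1)$ (which is roughly what one gets here), Fano only caps the success probability at $O(1/n)$ for $2^n$ hypotheses, far from the negligible bound needed, and far weaker than the paper's $O\!\left(\sqrt{m/2^n}\right)$. A minor further slip: $\{Z^x\ket{\phi}\}_x$ is only approximately orthonormal for Haar-random $\ket{\phi}$, not exactly.

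The paper's Lemma \ref{lem: key} instead analyzes the Pretty Good Measurement directly. It computes $\sigma = \sum_x \rho_x$, uses the Pauli-$Z$ twirl (Lemma \ref{lem: z twirl}) to write $\sigma$ in block-diagonal form with respect to the first register, diagonalizes it in the type-basis of the symmetric subspace (Lemma \ref{lem: basis}), and then bounds $\mathbb{E}_x\mathrm{Tr}[\rho_x\sigma^{-1/2}\rho_x\sigma^{-1/2}]$ term by term, finishing with Lemma \ref{lem: pgm} (PGM is quadratically close to optimal). You would need to replace your trace-distance step with an argument of this flavor --- one that directly bounds the multi-hypothesis guessing probability rather than pairwise closeness of the $\rho_x$.
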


Our construction is remarkably simple. It uses as a building block a construction of pseudorandom states from one-way functions proposed by Ji, Liu, and Song \cite{ji2018pseudorandom}, and later proven secure by Brakerski and Shmueli \cite{brakerski2019pseudo}. To achieve the trapdoor property we leverage the particular structure of this PRS construction. To prove security, we invoke the PRS property in order to reduce an adversary that inverts our quantum trapdoor function to one that succeeds at a certain information-theoretic ``state-discrimination'' task involving Haar random states.

In classical cryptography, the most common application of trapdoor functions is to construct public-key encryption. Any injective trapdoor function implies a public-key encryption scheme, where the secret key is the trapdoor, and the public key is the (description of the) tradpoor function $f$ itself. In particular, encryption of a message $m \in \{0,1\}$ takes the form $(f(x), hc(x) \oplus m)$, where $x$ is sampled uniformly at random, and $hc(x)$ is a hardcore bit of $x$, e.g.\ the one obtained via Goldreich-Levin \cite{goldreich1989hard}. A quantum trapdoor function yields a public-key encryption scheme in an analogous way, except that the ``public key'' is now the evaluation key $\ket{eval}$, and hence is a quantum state. Thus, the following is a direct consequence of Theorem \ref{thm: 1}. 

\begin{cor}[informal]
\label{cor: 1}
Assuming the existence of quantum-secure one-way functions, there exists a public-key encryption scheme where the public key is a (pure) quantum state.
\end{cor}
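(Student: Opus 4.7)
The plan is to follow the classical template: a trapdoor function together with a Goldreich--Levin hardcore bit yields a single-bit public-key encryption scheme. Given the QTF from Theorem~\ref{thm: 1}, with classical trapdoor $tr$, pure-state evaluation key $\ket{eval}$, and evaluation map $(x, \ket{eval}) \mapsto \ket{\psi_x}$, I would define the PKE scheme as follows. KeyGen runs the QTF generator and sets $sk := tr$ and $pk := \ket{eval}$; since $\ket{eval}$ is efficiently reproducible from $tr$, the key authority can hand out polynomially many copies of $pk$ on demand. To encrypt a bit $m \in \{0,1\}$ given a copy of $pk$, sample $x, r \gets \{0,1\}^n$ uniformly, run Eval to obtain $\ket{\psi_x}$, and output the ciphertext $(\ket{\psi_x}, r, \langle x, r\rangle \oplus m)$. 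To decrypt with $sk = tr$, use the trapdoor to recover $x$ from $\ket{\psi_x}$, and XOR $\langle x, r\rangle$ with the final bit. Correctness is immediate from QTF inversion correctness, and the public key is a pure quantum state by construction.

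For IND-CPA security I would reduce to QTF one-wayness via a quantum Goldreich--Levin argument. Suppose a QPT adversary $\A$, given polynomially many copies of $\ket{eval}$, distinguishes encryptions of $0$ and $1$ with non-negligible advantage $\epsilon$. Then $\A$ predicts $\langle x, r\rangle$ from $(\ket{\psi_x}, r)$ with the same advantage $\epsilon$. A quantum Goldreich--Levin extractor invokes this predictor on many choices of $r$ and reconstructs $x$, contradicting QTF one-wayness. The copies of $\ket{eval}$ required by $\A$ are simply forwarded from the one-wayness challenge.

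The main technical point I expect to manage is the standard subtlety of running the Goldreich--Levin extractor when the ``label'' $\ket{\psi_x}$ is a quantum state, since each invocation of the predictor may disturb it. There are two natural remedies. The first is to work with a QTF one-wayness formulation that grants the adversary polynomially many copies of $\ket{\psi_x}$, providing the extractor with a fresh copy for each predictor query; such a formulation is natural here since the honest evaluator can manufacture many copies of $\ket{\psi_x}$ from $x$ and many copies of $\ket{eval}$. The second is to invoke a quantum variant of Goldreich--Levin, in the spirit of Adcock--Cleve, that runs the predictor coherently on a single copy of the quantum input and still recovers $x$ with polynomial advantage. Either route closes the reduction and establishes the corollary.
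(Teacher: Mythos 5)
Your construction is exactly the paper's Construction~\ref{const: 2}, and your security argument matches the paper's proof of Theorem~\ref{thm: pk with qpk cpa security}: distinguishing advantage becomes a Goldreich--Levin predictor for $\langle x,r\rangle$ given $(\ket{\psi_x},r)$ plus $t$ copies of $\ket{eval}$ (forwarded from the one-wayness challenge), and the quantum Goldreich--Levin of Adcock--Cleve \cite{adcock2002quantum} is invoked precisely because it makes a \emph{single} call to the predictor, so the non-clonable quantum state $\ket{\psi_x}$ is never re-used. Your ``remedy (2)'' is therefore the paper's actual route, and you have correctly identified the one subtlety that makes the classical template nontrivial in this setting.

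Two smaller points. First, your ``remedy (1)'' is not justified as stated: the fact that the \emph{honest} evaluator (who knows $x$) can manufacture many copies of $\ket{\psi_x}$ says nothing about whether one-wayness survives when the \emph{adversary} is handed many copies of $\ket{\psi_x}$. A multi-copy one-wayness notion would be a genuinely stronger assumption requiring its own proof against the underlying PRS (the paper's Lemma~\ref{lem: key} is proven for a single ``shifted'' copy), and the paper never proves it; so if you present (1) as an alternative route, you should flag that it needs a new information-theoretic bound. Second, you stop at a single-bit scheme, but the corollary as instantiated in Definition~\ref{def: pk with qpk} allows arbitrary-length $m\in\{0,1\}^*$. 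In the classical world this is a triviality (encrypt bit by bit), but here the encryptor holds only one copy of the quantum $\ket{pk}$ and cannot clone it. The paper handles this by having $\mathsf{GenPK}$ emit $\ket{pk}^{\otimes n}$ and then using hybrid encryption: encrypt a fresh private-key key $\tilde{sk}$ bit by bit under the $n$ copies, and encrypt the actual message classically under $\tilde{sk}$. This step is missing from your proposal and is needed to close the corollary in full.
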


The existence of such a scheme was already shown in \cite{morimae2022one}, albeit with a mixed state as a public key\footnote{When the public key is described by a mixed state, either the generator is entangled with the public key, or the public key is essentially a classical distribution. In the latter case, two public keys are ``equal'' as distributions, although in practice they correspond to independently sampled classical strings.}. The construction in \cite{morimae2022one} does not seem to have any implication with regards to constructions of a quantum trapdoor function.

While a quantum state as public key is most likely impractical, and in some sense against the spirit of public-key encryption (where the point is that a party's public-key should be readily available to any party that wishes to send them an encrypted message), we still find the conclusion significant from a conceptual standpoint: the quantum public key is still ``public'' in the sense that the scheme is secure even if an adversary obtains an arbitrary (polynomial) number of copies of the public key. 

Our result suggests that there is some subtlety in defining what it means for quantum information to be ``public'': while in the classical setting having access to a single copy of a string is trivially equivalent to knowing a circuit that generates the string, in the quantum setting, having \emph{access to many copies of a state} is crucially different (in the computational setting) from \emph{knowing a circuit that generates the state}.

We point out that Corollary \ref{cor: 1} implies the following simple \emph{two-message} key-exchange protocol: Alice samples a secret key and a public key, and sends the public key to Bob; then, Bob sends back an encryption of a uniformly sampled key, which Alice is able to decrypt. This protocol is secure in a model where the quantum channel used by Alice and Bob is \emph{authenticated} in the following sense: Alice and Bob can trust the origin of the quantum message that they receive, and that it has not been tampered with, but the adversary obtains a copy of the same quantum message. One can think of this model as capturing a scenario where Alice is broadcasting many copies of the same state (which she can efficiently prepare) to a network of parties which includes Bob and potentially also an adversary. To place this result into context, in the analogous classical setting with authenticated channels, there is a well-known two-message key-exchange protocol based on the hardness of Decision Diffie-Hellman (DDH) \cite{diffie1976new}, but one can provably show that key-exchange, with any number of rounds, \emph{cannot be realized from black-box use of one-way functions} \cite{impagliazzo1989limits}. On the other hand, if one allows for more rounds of communication in the quantum setting, key exchange can of course be realized unconditionally via the BB84 protocol (where \emph{classical} authenticated channels suffice) \cite{bennett1984quantum}. We refer to Section \ref{sec: key exchange} for a more detailed discussion about key exchange, and the need for authenticated channels. %We have the following.

%\begin{cor}[informal]
%\label{cor: 2}
%Assuming the existence of quantum-secure one-way functions, and of a quantum authenticated channel as described above, there exists a two-message key-exchange protocol.
%\end{cor}

\subsection{Related and concurrent work}
Our results fit into a broader line of work that aims to understand how the landscape of cryptographic primitives changes in the presence of quantum information. The most well-known result in this direction is that key-exchange (or key-distribution) can be realized unconditionally using quantum communication, via the BB84 protocol \cite{bennett1984quantum}. However, this is somewhat of a standalone result, and does not directly imply that any of the other primitives in ``cryptomania'', e.g.\ trapdoor functions, public-key encryption, oblivious transfer and multiparty-computation, can be realized unconditionally or even from weaker assumptions than what is known classically. In a more recent development, two concurrent works \cite{bartusek2021one, grilo2021oblivious} show that oblivious transfer, and hence multiparty computation, can be realized from one-way functions. One can think of the constructions from these two works as making a more sophisticated use of the original ideas from BB84 in order to instantiate a template for oblivious transfer proposed in \cite{crepeau1988achieving}. 

Our work is the first to draw a connection between trapdoor functions and one-way functions in the quantum setting, and it does so by leveraging a simple novel idea based on the use of pseudorandom states (PRS). %\cite{ji2018pseudorandom}.
While in our work we only make use of the structure of a particular construction of PRSs from one-way functions, PRSs have recently received increasing attention more generally. They have been a central object of study in a related line of work that explores the possibility of basing cryptography on the computational hardness of \emph{genuinely quantum} problems (i.e.\ problems with quantum inputs and/or quantum outputs) \cite{ananth2022cryptography, brakerski2022computational, morimae2022quantum}. The existence of PRSs is an example of a computational assumption, involving a quantum problem, that is potentially weaker than the existence of one-way functions. While PRSs can be constructed from one-way functions, the converse is not known to be true: there is in fact evidence to the contrary, namely an oracle with respect to which PRSs exist, but one-way functions do not \cite{kretschmer2021quantum, kretschmer2022quantum}. 

Shortly after the first version of this work appeared, two papers \cite{grilo2023encryption, barooti2023simple} were posted giving simple constructions of public-key encryption schemes with a pure quantum state as a public key (as in our work). The constructions in the two papers \cite{grilo2023encryption, barooti2023simple} are very similar to each other, and they can be seen as improving upon the main idea of \cite{morimae2022one}, which achieved \emph{mixed state} quantum public keys. These constructions have the additional advantage that ciphertexts are \emph{classical} (which is not the case for the construction in our work). However, neither construction seems to imply anything about the existence of quantum trapdoor functions. We consider the latter to be the main contribution of our work.

\subsection{Open questions}
The main open questions left open by our work are: 
\begin{itemize}
\item Can we build a quantum trapdoor function with a \emph{classical} evaluation key from quantum-secure one-way functions? This would imply a public-key encryption scheme with a \emph{classical} public key (and quantum ciphertexts) from one-way functions. Note that the current definition of quantum trapdoor function asks for a classical trapdoor. Does relaxing this requirement to allow for a \emph{quantum} secret key help achieve a scheme with a classical public key? If not, can we prove that this is not possible if one makes black-box use of the one-way function? The current classical black-box separation of public-key encryption and one-way functions does not seem to directly apply as it relies on the ciphertext being classical. 
\item In the converse direction, does the existence of a \emph{quantum} trapdoor function (any of the variants), imply the existence of \emph{classical} one-way function? While in the classical setting trapdoor functions are a special case of one-way functions, it is unclear what the answer to the above question is. It seems plausible that there exists an oracle relative to which quantum trapdoor functions exist, but classical one-way functions do not. An intermediate step in this direction would be to exhibit an oracle relative to which quantum trapdoor functions exists, but classical trapdoor functions do not.
\end{itemize}

\subsection*{Acknowledgements}
The author thanks Tomoyuki Morimae and Takashi Yamakawa for pointing him to the construction and discussion of public-key encryption with quantum public keys in their work \cite{morimae2022one}, and for further discussions about mixed state versus pure state public keys.

\section{Technical Overview}
In this overview, we informally describe the construction of a quantum trapdoor function from one-way functions, and discuss its security. As mentioned, this construction uses as a building block a construction of PRSs from one-way functions due to Ji, Liu, and Song \cite{ji2018pseudorandom}, and later proven secure by Brakerski and Shmueli \cite{brakerski2019pseudo}.

\subsection{Pseudorandom States from one-way functions}
\label{sec: prs}
First, recall what a PRS is. A PRS can be thought of as a \emph{quantum analogue of a pseudorandom generator (PRG)}. A PRS takes as input a classical seed $s \in \{0,1\}^{n}$, where $n$ is a security parameter, and outputs a state $\ket{PRS(s)}$. We ask that the PRS satisfies the following property: it is computationally hard to distinguish between polynomially many copies of $\ket{PRS(s)}$, for a uniformly random $s$, and polynomially many copies of a state sampled from the Haar distribution. More precisely, for any quantum polynomial time $A$, and any $m=poly(n)$, 
$$ \left| \Pr[A(\ket{PRS(s)}^{\otimes m}) = 1: s \gets \{0,1\}^{n}] - \Pr[A(\ket{\psi}^{\otimes m}) = 1  : \ket{\psi} \gets \textnormal{Haar}] \right| = \negl(n) \,.$$

The following is a construction of a PRS from any (quantum-secure) one-way function. The construction is simple (although the proof that it is secure is quite involved). Let $\prf: \mathcal{K} \times \mathcal{X} \leftarrow \{0,1\}$ be a pseudorandom function (PRF) with a one-bit output, where there is an implicit security parameter that we are omitting.
Then the PRS seed is a PRF key $k$, and 
$$ \ket{PRS(k)} = \frac{1}{|\mathcal{X}|}\sum_{y \in \mathcal{X}} (-1)^{\prf(k, y)} \ket{y} \,.$$
Invoking the security of the PRF, we deduce that $m$ copies of $\ket{PRS(k)}$ are computationally indistinguishable from $m$ copies of a state that is a uniform superposition with a uniformly random $\pm 1$ phase. Showing that the latter is \emph{statistically} indistinguishable from $m$ copies of a Haar random state is quite involved, and is the main technical contribution of \cite{brakerski2019pseudo}. A simpler proof of this was later given in \cite{ananth2023pseudorandom}.

Note that the state $\ket{PRS(k)}$ can be generated efficiently by (i) preparing a uniform superposition over the elements of $\mathcal{X}$, (ii) initializing a second register in the state $\ket{-}$, and (iii) computing $\prf$ in superposition, treating the second register as the output register. 

\subsection{Quantum trapdoor functions}
As informally described earlier, a quantum trapdoor function (QTF) consists of the following quantum polynomial time algorithms:
\begin{itemize}
\item[(i)] A generation procedure that outputs a classical trapdoor $tr$ and a \emph{quantum} evaluation key $\ket{eval}$. Additionally we ask that there be an efficient algorithm to generate (the same) $\ket{eval}$ given $tr$.
\item[(ii)] An evaluation procedure that takes as input a string $x$, and $\ket{eval}$, and outputs a \emph{quantum} state $\ket{\psi_x}$. 
\item[(iii)] An inversion procedure that takes as input the trapdoor $tr$ and a state $\ket{\psi_x}$ and returns $x$.
\end{itemize}
For security, we require that, without knowing $tr$, it is hard to recover $x$ given $\ket{\psi_x}$ and an arbitrary (polynomial) number of copies of $\ket{eval}$.  

\paragraph{Construction}
The simple idea behind our construction of a QTF is the following. Consider the PRS construction described in Subsection \ref{sec: prs}. We will take the trapdoor of our QTF to be a PRF key $k$ (uniformly sampled), and the evaluation key to be $\ket{eval} = \ket{PRS(k)}$. For concreteness, for security parameter $n$, take the domain $\mathcal{X}$ of $\prf$ to be $\{0,1\}^n$, so that $\ket{\PRS(k)}$ is an $n$-qubit state. Then, the evaluation of the QTF on input $x$, and evaluation key $\ket{PRS(k)}$, is
$$ \ket{\psi_x} = Z^x \ket{PRS(k)} = \frac{1}{\sqrt{2^{n}}}\sum_{y \in \{0,1\}^n} (-1)^{x\cdot y +\prf(k, y)} \ket{y}\,.$$
Here $Z^x$ denotes the $n$-qubit operator that applies Pauli $Z$ to the $i$-th qubit or not based on the value of the $i$-th bit $x_i$.

To invert, i.e.\ to recover $x$ given $\ket{\psi_x}$ and the trapdoor $k$, simply ``undo'' the PRF phase, namely apply the unitary $G_{\prf}(k): \ket{y} \mapsto (-1)^{\prf(k,y)} \ket{y}$. The crucial observation is that $G_{\prf}(k)$ commutes with $Z^x$. Thus, the PRF phase can be undone ``out of order'', and 
$$ G_{\prf}(k) \ket{\psi_x} = \frac{1}{\sqrt{2^{n}}}\sum_{y \in \{0,1\}^n} (-1)^{x\cdot y} \ket{y} =  H^{\otimes n} \ket{x} \,.$$
Finally, applying $H^{\otimes n}$ returns $x$.

\begin{comment}
As anticipated, our construction uses as a building block the PRS construction from Subsection \ref{sec: prs}. So, again let $\prf: \mathcal{K} \times \mathcal{X} \leftarrow \{0,1\}$ be a PRF. And let $\ket{PRS(k)} = \frac{1}{|\mathcal{X}|}\sum_{x \in \mathcal{X}} (-1)^{\prf(k, x)} \ket{x} $.
Define $(\mathsf{Sample}, \mathsf{Eval}, \mathsf{Invert})$ as follows:
    \begin{itemize}
    \item $\mathsf{Sample}(1^{n}) \rightarrow (t, \ket{eval})$: Sample a PRF key $k \gets \mathcal{K}_{n}$. Output $t =k$, and $$\ket{eval} = \ket{PRS(k)}\,.$$
    \item $\mathsf{Eval}(\ket{eval}, x) \rightarrow \ket{\phi}$: Output $\ket{\phi} = Z^x \ket{eval}$.
    \item $\mathsf{Invert}(t, \ket{\phi}) \rightarrow x$: Compute $\left(U_{\prf}(t)\right)^{-1}  \ket{\phi}$, and measure in the standard basis. Output the outcome $x$.
    \end{itemize}
\end{comment}

\paragraph{Security}
Recall that we wish to establish that, without the trapdoor, it is hard to recover $x$ given $\ket{\psi_x}$ and $m$ copies of the evaluation key $\ket{eval}$, for any $m = poly$. In the case of our construction, this means that it should be hard to recover $x$ from the state 
$$\big(\ket{Z^x} \ket{PRS(k)}\big) \otimes \ket{PRS(k)}^{\otimes m} \,.$$
We can think of this of a ``state discrimination'' task: given a (uniformly random) mixed state from the ensemble
\begin{equation} 
\Big\{ \mathbb{E}_k \left(Z^x \ket{PRS(k)}\bra{PRS(k)} Z^x \right) \otimes  (\ket{PRS(k)}\bra{PRS(k)})^{\otimes m} \Big\}_{x \in \{0,1\}^n} \,,
\end{equation}
the task is to guess $x$. We wish to argue that the probability of guessing $x$ is negligible. 

Now, suppose that there is an adversary $A$ that breaks security of our QTF construction, and thus succeeds at the above discrimination task with non-negligible probability. Then, by invoking the security of the PRS, it must be the case the same adversary $A$ also succeeds with non-negligible probability when we replace $\ket{PRS(k)}$ with a Haar random state (otherwise we can construct an adversary that has non-negligible advantage at distinguishing copies of the PRS state from copies of a Haar random state). Thus, $A$ must also succeed with non-negligible probability at the following discrimination task: given a uniformly random mixed state from the ensemble 
\begin{equation} 
\label{eq: state disc 2}
\Big\{ \mathbb{E}_{\ket{\psi} \gets \textnormal{Haar}(2^{n})} \left(Z^x \ket{\psi}\bra{\psi} Z^x \right) \otimes  (\ket{\psi}\bra{\psi})^{\otimes m} \Big\}_{x \in \{0,1\}^n} \,,
\end{equation}
the task is to guess $x$ (where $\ket{\psi} \gets \textnormal{Haar}(2^{n})$ denotes sampling from the Haar distribution on $n$-qubit states).

We get a contradiction by showing that the optimal success probability at the state discrimination task in \eqref{eq: state disc 2} is exponentially small when $m=poly(n)$ (see Lemma \ref{lem: key} for the precise scaling). At first, this might seem slightly surprising because the ``classical analogue'' of this discrimination task is \emph{easy}, and can be solved perfectly. The classical analogue is distinguishing between \emph{distributions over strings} given a single sample. Thinking of the operator $Z^x$ as ``xoring $x$'' (in the Hadamard basis), and realizing that copies of a string are ``for free'' classically, a natural classical analogue of the state discrimination task in \eqref{eq: state disc 2} is: given a single sample drawn from one of the distributions in the following ensemble (where $x$ is picked uniformly at random)
\begin{equation} 
\{ (\psi \oplus x, \psi): \psi \gets \{0,1\}^{n} \}_{x \in \{0,1\}^n} \,,
\end{equation}
guess $x$.
Clearly one can recover $x$ by taking the xor of $\phi $ and $\phi \oplus x$. In contrast, this intuition fails for the state discrimination task in \eqref{eq: state disc 2}: even though one has access to polynomially many copies of $\ket{\psi}$, it is unclear how one can ``compare'' the first copy (to which $Z^x$ was applied) to the others in order to recover $x$. The proof that recovering $x$ can be done at most with exponentially small probability is somewhat involved, and we refer to Section \ref{sec: security} for the details.

\section{Preliminaries}
For $n \in \mathbb{N}$, we denote $[n] = \{1,\ldots,n\}$. For a finite set $X$, we write $x \gets X$ to mean that $x$ is sampled uniformly at random from $S$. 

We think of a quantum algorithm as a uniform family of quantum circuits. If the circuits in the family are polynomial-sized then we say that the algorithm is quantum polynomial time, which we abbreviate as QPT. We use the notation $poly$ to denote an (unspecified) polynomially-bounded function. 

\subsection{Quantum information}
We introduce some facts that we will use later on. For $n \in \mathbb{N}$, denote by $\{\ket{j}: j \in [2^n]\}$ the standard basis of the space of $n$ qubits. Let $Z$ be the Pauli $Z$ operator. For $s \in \{0,1\}^n$, let $Z^s := \bigotimes_{i=1}^n Z^{s_i}$. 
\begin{lem}[Pauli Z twirl]
\label{lem: z twirl}
Let $n,m \in \mathbb{N}$. For any $\ket{\psi} \in (\mathbb{C}^2)^{\otimes n} \otimes (\mathbb{C}^2)^{\otimes m} $,
$$ \mathbb{E}_{s \gets \{0,1\}^n} (Z^s \otimes I) \ket{\psi}\bra{\psi} (Z^s\otimes I) = \sum_{j\in [2^n]} (\ket{j}\bra{j}\otimes I) \ket{\psi}\bra{\psi} (\ket{j}\bra{j} \otimes I) \,.$$
\end{lem}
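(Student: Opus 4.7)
The plan is to expand $\ket{\psi}$ in the standard basis on the first $n$ qubits and reduce both sides to the same expression via the elementary character sum $\mathbb{E}_{s \gets \{0,1\}^n} (-1)^{s \cdot t} = [t = 0]$ for $t \in \{0,1\}^n$.

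First, I would write
$$ \ket{\psi} = \sum_{j \in \{0,1\}^n} \ket{j} \otimes \ket{\phi_j} $$
for (unnormalized) vectors $\ket{\phi_j}$ on the $m$-qubit register. Since $Z^s \ket{j} = (-1)^{s \cdot j} \ket{j}$, applying $Z^s \otimes I$ gives
$$ (Z^s \otimes I) \ket{\psi} = \sum_j (-1)^{s \cdot j} \ket{j} \otimes \ket{\phi_j}, $$
and therefore
$$ (Z^s \otimes I) \ket{\psi}\bra{\psi} (Z^s \otimes I) = \sum_{j, j' \in \{0,1\}^n} (-1)^{s \cdot (j \oplus j')} \, \ket{j}\bra{j'} \otimes \ket{\phi_j}\bra{\phi_{j'}}. $$

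Next, I would take the expectation over uniform $s$ and push it inside the sum. Using $\mathbb{E}_{s \gets \{0,1\}^n}(-1)^{s \cdot t} = [t=0]$ with $t = j \oplus j'$, all cross terms with $j \ne j'$ vanish, leaving
$$ \mathbb{E}_{s} (Z^s \otimes I) \ket{\psi}\bra{\psi} (Z^s \otimes I) = \sum_{j} \ket{j}\bra{j} \otimes \ket{\phi_j}\bra{\phi_j}. $$

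Finally, I would verify that the right-hand side of the lemma equals the same expression: since $(\ket{j}\bra{j} \otimes I)\ket{\psi} = \ket{j} \otimes \ket{\phi_j}$, we get $(\ket{j}\bra{j} \otimes I)\ket{\psi}\bra{\psi}(\ket{j}\bra{j} \otimes I) = \ket{j}\bra{j} \otimes \ket{\phi_j}\bra{\phi_j}$, and summing over $j$ gives the claim. There is no real obstacle here; the only thing to be careful about is the harmless indexing conversion between $[2^n]$ and $\{0,1\}^n$ used in the statement, which is purely notational.
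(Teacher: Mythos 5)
Your proof is correct, and it is precisely the ``straightforward calculation'' the paper alludes to without spelling out: expand in the computational basis on the first register, apply the character sum $\mathbb{E}_s(-1)^{s\cdot t}=[t=0]$ to kill the off-diagonal terms, and match against the right-hand side. No gaps.
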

\begin{proof}
This follows from a straightforward calculation.
\end{proof}

We will also make use of what is referred to as the ``Pretty Good Measurement''. In a state discrimination task, one receives a single copy of a (mixed) state $\rho_x$ from an ensemble $\{\rho_x\}_{x \in \mathcal{X}}$, where $\mathcal{X}$ is some index set. The goal is to correctly guess $x$, where usually $x$ is sampled uniformly at random from $\mathcal{X}$. Let $\sigma := \sum_x \rho_x$. Then, the ``Pretty Good Measurement'' (PGM) is the POVM $\{M_x\}_{x\in \mathcal{X}} \cup \{M_{\perp}\}$ where $M_x = \sigma^{-\frac12} \rho_x \sigma^{-\frac12}$ and $M_{\perp} = \mathds{1}_{\textnormal{Ker}(\sigma)}$ is the projection onto the kernel of $\sigma$.\footnote{Equivalently, to get a POVM without an additional ``$\perp$'' outcome, the POVM element $M_{\perp}$ can be summed to one of the other POVM elements without affecting any of the outcome probabilities}. While in general the PGM may not be the POVM that gives the optimal success probability at the state discrimination task, the following lemma says that the PGM is ``pretty good''.

\begin{lem}[PGM is optimal up to quadratic loss]
\label{lem: pgm}
If $p$ is the optimal success probability at a state discrimination task, then the success probability of the PGM for the same task is at least $p^2$.
\end{lem}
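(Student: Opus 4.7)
The plan is to prove this via the classical Barnum--Knill argument, which amounts to a single application of the Cauchy--Schwarz inequality in the Hilbert--Schmidt inner product, followed by a simple operator inequality. To set things up, I would absorb the uniform prior $1/|\mathcal{X}|$ into the states by defining $\tilde{\rho}_x := \rho_x / |\mathcal{X}|$, so that $\tilde{\sigma} := \sum_x \tilde{\rho}_x$ is a density operator with unit trace and the PGM can equivalently be written as $M_x = \tilde{\sigma}^{-1/2} \tilde{\rho}_x \tilde{\sigma}^{-1/2}$ on the range of $\tilde{\sigma}$. The optimal success probability is then $p = \sum_x \mathrm{Tr}(E_x \tilde{\rho}_x)$ for some POVM $\{E_x\}$ attaining the optimum, and the PGM success probability is $p_{PGM} = \sum_x \mathrm{Tr}(M_x \tilde{\rho}_x)$.

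The key step is to rewrite each $\mathrm{Tr}(E_x \tilde{\rho}_x)$ in a symmetric ``sandwich'' form. Using $\tilde{\rho}_x = \tilde{\sigma}^{1/2} M_x \tilde{\sigma}^{1/2}$ and cyclicity of the trace, one has
\[
    \mathrm{Tr}(E_x \tilde{\rho}_x) \;=\; \mathrm{Tr}\bigl( (\tilde{\sigma}^{1/4} E_x \tilde{\sigma}^{1/4}) \cdot (\tilde{\sigma}^{1/4} M_x \tilde{\sigma}^{1/4}) \bigr).
\]
Writing $A_x := \tilde{\sigma}^{1/4} E_x \tilde{\sigma}^{1/4}$ and $B_x := \tilde{\sigma}^{1/4} M_x \tilde{\sigma}^{1/4}$ (both positive and self-adjoint), and viewing $\sum_x \mathrm{Tr}(A_x B_x)$ as a Hilbert--Schmidt inner product on the direct-sum space, Cauchy--Schwarz yields $p^2 \leq \bigl(\sum_x \mathrm{Tr}(A_x^2)\bigr) \bigl(\sum_x \mathrm{Tr}(B_x^2)\bigr)$. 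A short cyclicity calculation gives $\sum_x \mathrm{Tr}(B_x^2) = \sum_x \mathrm{Tr}(M_x \tilde{\rho}_x) = p_{PGM}$, so it remains to show $\sum_x \mathrm{Tr}(A_x^2) \leq 1$.

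For that final bound, I would use that $E_x \leq I$ implies $0 \leq A_x \leq \tilde{\sigma}^{1/2}$ as operators. Conjugating the inequality $\tilde{\sigma}^{1/2} - A_x \geq 0$ by $A_x^{1/2}$ gives $A_x^2 \leq A_x^{1/2} \tilde{\sigma}^{1/2} A_x^{1/2}$, so by cyclicity $\mathrm{Tr}(A_x^2) \leq \mathrm{Tr}(\tilde{\sigma}^{1/2} A_x) = \mathrm{Tr}(\tilde{\sigma} E_x)$. Summing over $x$ and using $\sum_x E_x = I$ gives $\mathrm{Tr}(\tilde{\sigma}) = 1$, so $\sum_x \mathrm{Tr}(A_x^2) \leq 1$, and one concludes $p^2 \leq p_{PGM}$.

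I expect the main subtlety to be picking the right factorization so that Cauchy--Schwarz cleanly separates a ``PGM factor'' from a factor that can be bounded by $1$; the symmetric use of $\tilde{\sigma}^{1/4}$ on both sides of both $E_x$ and $M_x$ is what makes the sandwich align. A minor technical point is handling the kernel of $\tilde{\sigma}$ (where the PGM is completed by the projector $M_\perp$): since each $\tilde{\rho}_x$ has no weight on this kernel, the $M_\perp$ outcome contributes nothing to the success probability, and the entire argument can be carried out on the support of $\tilde{\sigma}$ without loss.
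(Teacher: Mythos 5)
The paper does not give a proof of this lemma; it is stated as the well-known Barnum--Knill bound. Your proof is correct and is the standard argument: the sandwich identity $\tilde{\rho}_x = \tilde{\sigma}^{1/2} M_x \tilde{\sigma}^{1/2}$ on the support of $\tilde{\sigma}$, the $\tilde{\sigma}^{1/4}$-symmetrized Cauchy--Schwarz in the Hilbert--Schmidt inner product giving $p^2 \leq \bigl(\sum_x \mathrm{Tr}(A_x^2)\bigr)\bigl(\sum_x \mathrm{Tr}(B_x^2)\bigr)$, the computation $\sum_x \mathrm{Tr}(B_x^2) = p_{\mathrm{PGM}}$, and the operator bound $\mathrm{Tr}(A_x^2) \leq \mathrm{Tr}(\tilde{\sigma} E_x)$ obtained from $0 \leq A_x \leq \tilde{\sigma}^{1/2}$ all check out, as does the observation that the $M_\perp$ outcome never fires because each $\tilde{\rho}_x$ is supported on the range of $\tilde{\sigma}$. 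One cosmetic remark: since the paper defines the PGM with $\sigma = \sum_x \rho_x$ (no prior factored in), it is worth noting explicitly, as you implicitly do, that replacing $\sigma$ by $\tilde{\sigma} = \sigma/|\mathcal{X}|$ leaves $M_x = \sigma^{-1/2}\rho_x\sigma^{-1/2} = \tilde{\sigma}^{-1/2}\tilde{\rho}_x\tilde{\sigma}^{-1/2}$ unchanged, so the two normalizations define the same measurement.
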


\subsection{Pseudorandom functions}
We recall the notion of \emph{quantum-secure} pseudorandom functions.

\begin{definition}[Quantum-secure pseudorandom function] 
%Let $\mathcal{K} = \{\mathcal{K}_n\}, \mathcal{X} = \{\mathcal{X}_n\}$, and $\mathcal{Y} = \{\mathcal{Y}_n\}$ be families of sets.
Let $\prf = \{\prf_n\}_{n \in \mathbb{N}}$, $\prf_n: \mathcal{K}_n \times  \mathcal{X}_n \rightarrow  \mathcal{Y}_n$ be an efficiently computable function, where $\mathcal{K}_n$ is referred to as the key space, $\mathcal{X}_n$ as the domain, and $\mathcal{Y}_n$ as the co-domain. We say that $\prf$ is a quantum-secure pseudorandom function if for every (non-uniform) QPT oracle algorithm $A$, there exists a negligible function $\negl$ such that, for all $n \in \mathbb{N}$,
$$ \Big| \Pr_{k\gets \mathcal{K}_n}[A^{\prf(k, \cdot)}(1^n)= 1] - \Pr_{O\gets \mathcal{Y}^{\mathcal{X}}}[A^O(1^n) =1 ]    \Big| \leq \negl(n) \,.$$
\end{definition}
Quantum-secure pseudorandom functions exist, assuming the existence of quantum-secure one-way functions \cite{zhandry2012construct}.

\subsection{Quantum randomness and pseudorandomness}
	\subsubsection{The Haar Measure on Quantum States}
 \label{sec: haar}
	The Haar measure on a $d$-dimensional quantum state is the uniform (continuous) probability distribution over $d$-dimensional quantum states i.e. the uniform distribution over unit vectors in $\mathbb{C}^{d}$. It can be thought of as the quantum analogue of a classical uniform distribution over strings.

For $d,m \in \mathbb{N}$, we denote the density matrix obtained by sampling a state according to the Haar measure on $d$-dimensional states, and outputting $m$ copies of it, as
\begin{equation}
\mathbb{E}_{\ket{\psi} \gets \textnormal{Haar}(d)} \left[(\ket{\psi}\bra{\psi})^{\otimes m}\right] \,.\label{eq: haar average}
\end{equation}
For a Hilbert space $\mathcal{H}$, we sometimes also denote the Haar measure on states in $\mathcal{H}$ as $\textnormal{Haar}(\mathcal{H})$.

 \begin{comment}
	Formally, the density matrix representing the quantum distribution of drawing a random Haar vector and outputting $t$ copies of it is given below.
	\begin{definition}[$n$-Qubits, $t$-Copy Random Haar State]
		Let $t, n \in \mathbb{N}$, we define the $n$-qubits $t$-copy random Haar mixed state to be
		$$
		\rho_{( t, n, H )} := \int_{\ket{\psi} \in {\cal S}(2^n)} ( \ket{\psi}\bra{\psi} )^{\otimes t}\mathrm{d}\mu(\psi) \enspace,
		$$
		where $\mu( \cdot )$ is the Haar measure on ${\cal S}(2^n)$.
	\end{definition}
\end{comment}

We refer to $\textnormal{span}\{\ket{\phi}^{\otimes m}: \ket{\phi} \in \mathbb{C}^d \}$ as the \emph{symmetric subspace} of $(\mathbb{C}^d)^{\otimes m}$. We will later make use of the following characterization of the density matrix in \eqref{eq: haar average}. Let $\sym^{d,m}$ denote the orthogonal projector onto the \emph{symmetric subspace} of $(\mathbb{C}^{d})^{\otimes m}$. We have the following.
\begin{lem}[\cite{harrow2013church}]
\label{lem: haar sym subspace}
Let $d,m \in \mathbb{N}$. Then, 
$$\mathbb{E}_{\ket{\psi} \gets \textnormal{Haar}(d)} \left[(\ket{\psi}\bra{\psi})^{\otimes m}\right] = \binom{d+m-1}{m}^{-1} \sym^{d,m} \,.$$
\end{lem}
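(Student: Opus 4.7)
The plan is to combine a symmetry argument with a trace normalization. Let $\rho := \mathbb{E}_{\ket{\psi} \gets \textnormal{Haar}(d)} \left[(\ket{\psi}\bra{\psi})^{\otimes m}\right]$. First I would observe two elementary facts about $\rho$: (a) it is supported on the symmetric subspace, since every tensor power $\ket{\psi}^{\otimes m}$ lies there, and (b) for every unitary $U$ on $\mathbb{C}^d$, $U^{\otimes m} \rho (U^{\dagger})^{\otimes m} = \rho$. Fact (b) follows from the left-invariance of the Haar measure: sampling $\ket{\psi}$ from $\textnormal{Haar}(d)$ and then applying $U$ yields the same distribution as sampling from $\textnormal{Haar}(d)$ directly, so $\rho$ equals its $U^{\otimes m}$-conjugate.

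Next I would invoke the representation-theoretic fact that the map $U \mapsto U^{\otimes m}$ acts \emph{irreducibly} on the symmetric subspace of $(\mathbb{C}^d)^{\otimes m}$. Combining this with Schur's lemma, any operator that both commutes with $U^{\otimes m}$ for every $U \in U(d)$ and is supported on the symmetric subspace must be a scalar multiple of the projector $\sym^{d,m}$. Hence $\rho = c \cdot \sym^{d,m}$ for some constant $c$.

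Finally I would fix the constant by taking traces. Since $\rho$ is a density matrix, $\Tr{\rho}=1$, while $\Tr{\sym^{d,m}}$ equals the dimension of the symmetric subspace of $(\mathbb{C}^d)^{\otimes m}$, which is the standard "stars and bars" count $\binom{d+m-1}{m}$ (the number of multisets of size $m$ drawn from a $d$-element alphabet, providing an orthonormal basis of symmetric "type" states). This gives $c = \binom{d+m-1}{m}^{-1}$, as claimed.

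The main obstacle is justifying the irreducibility statement used in the middle step. The cleanest route I would take is to show that $\textnormal{span}\{\ket{\psi}^{\otimes m} : \ket{\psi} \in \mathbb{C}^d\}$ is in fact equal to the whole symmetric subspace (not just contained in it) — for instance by expanding $(\alpha_1 \ket{1} + \cdots + \alpha_d \ket{d})^{\otimes m}$ and reading off the coefficients of distinct monomials in $(\alpha_1,\ldots,\alpha_d)$ to recover all symmetric type-basis vectors, or via a polarization identity. Any operator commuting with all $U^{\otimes m}$ must then preserve every one-dimensional subspace $\mathbb{C}\ket{\psi}^{\otimes m}$ on average over $U$, from which irreducibility, hence Schur's lemma, applies. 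If one wants to sidestep representation theory altogether, an alternative route is to evaluate $\rho$ directly in the type basis of the symmetric subspace and invoke the standard Dirichlet-like integral $\int |\alpha_1|^{2k_1}\cdots |\alpha_d|^{2k_d} d\mu(\psi) = \frac{k_1!\cdots k_d!}{(d+m-1)!/(d-1)!}$ over the sphere in $\mathbb{C}^d$; both the off-diagonal (vanishing by phase averaging) and diagonal matrix elements then line up exactly with $\binom{d+m-1}{m}^{-1}\sym^{d,m}$.
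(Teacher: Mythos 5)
The paper does not prove this lemma; it cites it to \cite{harrow2013church}, and your argument is precisely the standard Schur's-lemma proof given in that reference, so you are on the right track. Facts (a) and (b) are correct, trace normalization is correct, and $\dim \sym^{d,m} = \binom{d+m-1}{m}$ is the right stars-and-bars count. The one place you should tighten is the justification of irreducibility: the sentence "any operator commuting with all $U^{\otimes m}$ must preserve every one-dimensional subspace $\mathbb{C}\ket{\psi}^{\otimes m}$ on average over $U$" is not a well-defined step, and knowing that $\{\ket{\psi}^{\otimes m}\}$ spans the symmetric subspace does not by itself give irreducibility. A clean way to finish the representation-theoretic route is: if $A$ is supported on the symmetric subspace and commutes with all $U^{\otimes m}$, it commutes in particular with all diagonal $U$, so it preserves weight spaces; the highest weight $(m,0,\ldots,0)$ is nondegenerate, so $A\ket{e_1}^{\otimes m} = c\ket{e_1}^{\otimes m}$; then for any $\ket{\psi}=U\ket{e_1}$, $A\ket{\psi}^{\otimes m}=U^{\otimes m}A\ket{e_1}^{\otimes m}=c\ket{\psi}^{\otimes m}$, and since these vectors span, $A=c\,\sym^{d,m}$. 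Alternatively, your second route (computing $\bra{s(\vec t)}\rho\ket{s(\vec t')}$ directly, with off-diagonals killed by independent phase averaging and diagonals given by the Dirichlet moment $\frac{(d-1)!\,\prod_i t_i!}{(d+m-1)!}$, which combined with $|\braket{s(\vec t)}{\psi^{\otimes m}}|^2=\binom{m}{\vec t}\prod_i|\alpha_i|^{2t_i}$ yields exactly $\binom{d+m-1}{m}^{-1}$ on every diagonal) is fully rigorous as written and avoids the irreducibility issue entirely; I would lead with that if you want a self-contained proof.
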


We will also use the fact that the symmetric subspace has the following convenient basis. For $d,m \in \mathbb{N}$, define $\mathcal{I}_{d,m} = \{(t_1, .., t_d): t_1,\cdots, t_d \in \mathbb{N}, t_1+\cdots+t_d = m\}$. For a vector $\vec{j} =(j_1,\ldots, j_{m}) \in [d]^{m}$, denote 
by $T(\vec{j})$ its \emph{type}, i.e. $T(\vec{j}) $ is defined to be the vector in $\mathcal{I}_{d,m}$ whose $i$-th entry is the number of times $i$ appears in the string $(j_1,\ldots, j_m)$. For $\vec{t} \in \mathcal{I}_{d,m}$, define
$$ s(\vec{t}) := \binom{m}{\vec{t}}^{-\frac12} \sum_{\vec{j}: T(\vec{j}) = \vec{t}} \ket{j_1,\ldots, j_{m}} \,,$$
where $\binom{m}{\vec{t}} = \frac{m!}{t_1!\ldots t_d!}$. The $s(\vec{t})$ vectors form an orthonormal basis of the symmetric subspace.
\begin{lem}[\cite{harrow2013church}]
\label{lem: basis}
Let $d,m \in \mathbb{N}$. Then, 
$$\textnormal{span}\{\ket{\phi}^{\otimes m}: \ket{\phi} \in \mathbb{C}^d \} = \textnormal{span}\{ \ket{s(\vec{t})}: \vec{t} \in \mathcal{I}_{d,m}\} \,.$$
\end{lem}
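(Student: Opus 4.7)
The plan is to show that both sides of the claimed equality coincide with a single auxiliary subspace, namely the \emph{permutation-symmetric subspace} $V \subseteq (\mathbb{C}^d)^{\otimes m}$ consisting of all vectors fixed by every permutation $\pi$ of the $m$ tensor factors. Establishing $V = \textnormal{span}\{\ket{s(\vec{t})}: \vec{t} \in \mathcal{I}_{d,m}\}$ is the easier half; establishing $V = \textnormal{span}\{\ket{\phi}^{\otimes m}: \ket{\phi} \in \mathbb{C}^d\}$ is the more interesting half, and that is where I expect the real work.

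First I would verify that $\{\ket{s(\vec{t})}\}_{\vec{t} \in \mathcal{I}_{d,m}}$ is an orthonormal basis of $V$. Each $\ket{s(\vec{t})}$ is manifestly invariant under permutation of tensor factors, and the normalization $\binom{m}{\vec{t}}^{-1/2}$ is correct because there are exactly $\binom{m}{\vec{t}}$ strings $\vec{j} \in [d]^m$ of type $\vec{t}$. Orthogonality for $\vec{t} \neq \vec{t}\,'$ is immediate from disjoint supports in the computational basis. Conversely, if $v = \sum_{\vec{j}} c_{\vec{j}} \ket{\vec{j}}$ lies in $V$, then permutation-invariance forces $c_{\vec{j}}$ to depend only on $T(\vec{j})$, so $v$ is a linear combination of the $\ket{s(\vec{t})}$'s. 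In particular $\dim V = |\mathcal{I}_{d,m}| = \binom{d+m-1}{m}$. The inclusion $\textnormal{span}\{\ket{\phi}^{\otimes m}\} \subseteq V$ is then trivial since each $\ket{\phi}^{\otimes m}$ is permutation-invariant.

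The substantive step is the reverse inclusion $V \subseteq \textnormal{span}\{\ket{\phi}^{\otimes m}\}$. For this I would expand $\ket{\phi} = \sum_{j=1}^d \alpha_j \ket{j}$, group the resulting monomials by type, and obtain the key identity
$$\ket{\phi}^{\otimes m} = \sum_{\vec{t} \in \mathcal{I}_{d,m}} \binom{m}{\vec{t}}^{1/2} \left( \prod_{j=1}^d \alpha_j^{t_j} \right) \ket{s(\vec{t})} \,.$$
In the orthonormal basis $\{\ket{s(\vec{t})}\}$, the coordinates of $\ket{\phi}^{\otimes m}$ are, up to fixed nonzero scalars, the degree-$m$ monomials $\prod_j \alpha_j^{t_j}$. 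The main obstacle is therefore to show that the span of these coordinate vectors, as $\ket{\phi}$ ranges over $\mathbb{C}^d$, is all of $\mathbb{C}^{|\mathcal{I}_{d,m}|}$. This follows from the fact that distinct monic monomials are linearly independent as polynomials in $(\alpha_1,\ldots,\alpha_d)$: concretely, one can pick $N = |\mathcal{I}_{d,m}|$ vectors $\ket{\phi^{(1)}},\ldots,\ket{\phi^{(N)}}$ (for instance, taking $\alpha_j^{(k)} = c_k^{j-1}$ with generic $c_k$ and invoking a Vandermonde-type nonvanishing argument) so that the $N \times N$ matrix of monomial evaluations is nonsingular. Then $\ket{\phi^{(1)}}^{\otimes m}, \ldots, \ket{\phi^{(N)}}^{\otimes m}$ are $N$ linearly independent elements of $V$, and a dimension count closes the argument. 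An alternative route, if the Vandermonde step is awkward, is to write each $\ket{s(\vec{t})}$ explicitly via a finite-difference polarization formula applied to $\ket{\phi_0 + \epsilon_1 \phi_1 + \cdots + \epsilon_k \phi_k}^{\otimes m}$, extracting the desired coefficient as a finite linear combination of product states; either route completes the proof.
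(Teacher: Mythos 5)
The paper cites Lemma \ref{lem: basis} from Harrow's survey \cite{harrow2013church} without reproducing a proof, so there is no in-paper argument to compare against; I assess your proposal on its own merits. Your structure is the standard one: route both sides through the permutation-symmetric subspace $V$, establish $V = \textnormal{span}\{\ket{s(\vec{t})}\}$ by observing that permutation invariance forces coefficients to depend only on type, derive the expansion $\ket{\phi}^{\otimes m} = \sum_{\vec{t}}\binom{m}{\vec{t}}^{1/2}\bigl(\prod_j \alpha_j^{t_j}\bigr)\ket{s(\vec{t})}$, and close with a dimension/independence argument. That expansion and the easy inclusion are correct.

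The one concrete slip is your suggested Vandermonde instantiation. Setting $\alpha_j^{(k)} = c_k^{j-1}$ gives $\prod_j (\alpha_j^{(k)})^{t_j} = c_k^{e(\vec{t})}$ with $e(\vec{t}) = \sum_j (j-1)t_j$, and $e$ is not injective on $\mathcal{I}_{d,m}$ once $d\ge 3$ and $m\ge 2$: for $d=3$, $m=2$, both $\vec t=(0,2,0)$ and $\vec t=(1,0,1)$ give $e=2$, so two columns of the $N\times N$ evaluation matrix coincide for \emph{every} choice of $c_1,\ldots,c_N$, and no genericity argument rescues that particular ansatz. The clean fix is to argue by duality rather than exhibiting points: if some functional $\sum_{\vec t}\beta_{\vec t}\bra{s(\vec t)}$ annihilated every $\ket{\phi}^{\otimes m}$, then $\sum_{\vec t}\beta_{\vec t}\binom{m}{\vec t}^{1/2}\prod_j\alpha_j^{t_j}$ would be identically zero as a polynomial in $(\alpha_1,\ldots,\alpha_d)$, forcing all $\beta_{\vec t}=0$ because distinct monomials are linearly independent; this gives $\dim\textnormal{span}\{\ket{\phi}^{\otimes m}\} = N$ directly. (Your polarization alternative also works, and if you insist on explicit points, one can build them greedily using linear independence of the monomials, but the $c_k^{j-1}$ choice specifically fails.) So the approach is right; only that one implementation detail needs replacing.
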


\subsubsection{Pseudorandom states}
The notion of pseudorandom quantum states was introduced in \cite{ji2018pseudorandom}. The following is a formal definition.
\begin{definition}[Pseudorandom Quantum State (PRS)]\label{def: prs}
A Pseudorandom Quantum State (PRS) is a pair of $QPT$ algorithms $(\mathsf{GenKey},\mathsf{GenState})$ such that the following holds. There is a family of Hilbert spaces $\{\mathcal{H}_n\}_{n \in \mathbb{N}}$, and a family $\{\mathcal{K}_n\}_{n \in \mathbb{N}}$ of subsets of $\{0,1\}^*$ such that:
	   \begin{itemize}
			\item $\mathsf{GenKey}(1^n) \rightarrow k$: Takes as input a security parameter $n$, and outputs a key $k \in \mathcal{K}_n$.
			\item $\mathsf{GenState}(k) \rightarrow \ket{\PRS(k)}$: Takes as input a key $k \in \mathcal{K}_n$, for some $n$, and outputs a state in $\mathcal{H}_n$. We additionally require that the state on input $k$ be unique, and we denote this as $\ket{\PRS(k)}$.
        \end{itemize}
Moreover, the following holds. For any (non-uniform) QPT quantum algorithm $A$, and any $m = poly$, there exists a negligible function $\negl$ such that, for all $n \in \mathbb{N}$,
			$$
			\abs{\Pr_{k \gets \mathsf{GenKey}(1^n)}[A\big( \ket{\PRS(k)}^{\otimes m(n)} \big) = 1] -
				\Pr_{\ket{\psi} \gets \textnormal{Haar}(\mathcal{H}_n)}[A\big(\ket{\psi}^{\otimes m(n)} \big) = 1]} \leq \negl(n) \,.$$
  \end{definition}

A PRS can be constructed from any quantum-secure one-way function \cite{ji2018pseudorandom}. Here we describe a particularly simple construction of PRSs that was proposed, and conjectured to be secure, in \cite{ji2018pseudorandom}, and later proven secure in \cite{brakerski2019pseudo}. 

\begin{construction}[PRS with binary phase \cite{ji2018pseudorandom, brakerski2019pseudo}]
\label{const: prs binary phase}
Let $\mathsf{PRF} = \{ \mathsf{PRF}_{n}  \}_{n \in \mathbb{N}}$ be a pseudorandom function family, where $\prf_{n}: \mathcal{K}_{n} \times \{0,1\}^{n} \rightarrow \{0,1\}$. Define $(\mathsf{GenKey}, \mathsf{GenState})$ as follows.
\begin{itemize}
\item $\mathsf{GenKey}(1^n) \rightarrow k$: Sample a PRF key $k \gets \mathcal{K}_n$. Output $k$.
\item $\mathsf{GenState}(k) \rightarrow \ket{\PRS(k)} $: On input $k \in \mathcal{K}_n$, output $\ket{\PRS(k)} = \sum_{y \in \{0,1\}^n} (-1)^{\prf(k,y)} \ket{y} $.
\end{itemize}
Note that $\mathsf{GenState}$ can be implemented efficiently by initializing an ancilla qubit in the state $\ket{-}$, and applying the unitary that computes $\prf(k, \cdot)$ into that qubit. 
\end{construction}

\begin{theorem}[\cite{brakerski2019pseudo}]
Construction \ref{const: prs binary phase} is a PRS.
\end{theorem}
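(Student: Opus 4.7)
The plan is to prove the theorem via a two-step hybrid argument that cleanly separates the computational ingredient (PRF security) from the statistical ingredient (closeness to Haar). In the first hybrid, replace the PRF $\prf(k, \cdot)$ by a uniformly random function $f:\{0,1\}^n \to \{0,1\}$, so that the state becomes the \emph{random binary phase state} $\ket{\phi_f} = 2^{-n/2}\sum_{y \in \{0,1\}^n} (-1)^{f(y)}\ket{y}$. Since $m=\poly(n)$ copies of $\ket{\PRS(k)}$ and of $\ket{\phi_f}$ can each be prepared by a single coherent query to $\prf(k,\cdot)$ or $f$ respectively (xor'd into a $\ket{-}$ ancilla, as in Construction \ref{const: prs binary phase}), any QPT distinguisher between the two ensembles gives a QPT quantum oracle distinguisher for $\prf$ versus a random function, contradicting quantum-security of the PRF. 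So the computational step reduces the theorem to showing that $\rho_m := \mathbb{E}_f\bigl[(\ket{\phi_f}\bra{\phi_f})^{\otimes m}\bigr]$ is within negligible trace distance of $\rho_m^{\mathrm{Haar}} := \mathbb{E}_{\ket{\psi}\gets \textnormal{Haar}(2^n)}\bigl[(\ket{\psi}\bra{\psi})^{\otimes m}\bigr]$.

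To handle the statistical step, I would expand $\rho_m$ in the standard basis:
\[
\rho_m = 2^{-nm}\sum_{\vec{x},\vec{y}\in (\{0,1\}^n)^m}\Bigl(\mathbb{E}_f \prod_{i=1}^m (-1)^{f(x_i)+f(y_i)}\Bigr)\,\ket{\vec{x}}\bra{\vec{y}}.
\]
The Fourier/parity average evaluates to $1$ exactly when every element of $\{0,1\}^n$ appears an even number of times in the multiset $\{x_1,\ldots,x_m,y_1,\ldots,y_m\}$, and $0$ otherwise. I would split the nonzero contributions into a \emph{collision-free} part, where $\vec{x}$ and $\vec{y}$ have no internal repetitions and $\vec{y}$ is a permutation of $\vec{x}$, and a \emph{collision} part, where some element repeats within $\vec{x}$ or $\vec{y}$. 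The collision-free part exactly reconstitutes the unnormalized projector onto the symmetric subspace restricted to tuples of distinct coordinates, which by Lemma \ref{lem: haar sym subspace} and Lemma \ref{lem: basis} is precisely the dominant piece of $\rho_m^{\mathrm{Haar}}$.

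The main obstacle, and the heart of the argument, is to bound the trace-norm contribution of the collision terms and of the "missing" collision terms in $\rho_m^{\mathrm{Haar}}$. Here I would use Lemma \ref{lem: basis} to write both density matrices in the type basis $\{\ket{s(\vec{t})}\}_{\vec{t}\in\mathcal{I}_{2^n,m}}$: up to a global normalization, $\rho_m^{\mathrm{Haar}} = \binom{2^n+m-1}{m}^{-1}\sum_{\vec{t}} \ket{s(\vec{t})}\bra{s(\vec{t})}$ is flat on the symmetric subspace, while $\rho_m$ has weights that depend on $\vec{t}$ through the multiplicities $t_j$ but lives in the same subspace. The probability that a uniformly random $\vec{x}\in(\{0,1\}^n)^m$ has any internal collision is $O(m^2/2^n) = \negl(n)$; pushing this birthday-style bound through the type-basis computation yields $\|\rho_m - \rho_m^{\mathrm{Haar}}\|_1 = O(m^2/2^n)$, which is negligible. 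Combining this with the PRF hybrid via the triangle inequality completes the proof. The only genuinely delicate calculation is the last one, and I would either reproduce the diagonal-in-type-basis bookkeeping from \cite{brakerski2019pseudo} or invoke the streamlined argument of \cite{ananth2023pseudorandom}, which the preliminaries already cite for exactly this purpose.
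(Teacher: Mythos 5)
The paper itself does not reprove this theorem — it cites \cite{brakerski2019pseudo}, and in its Technical Overview it sketches the same two-step argument you use: a computational hybrid replacing the PRF with a truly random function, followed by a statistical bound showing that $m$ copies of a random binary-phase state are close to $m$ copies of a Haar state. Your hybrid step is correct, your Fourier/parity evaluation of $\mathbb{E}_f \prod_i (-1)^{f(x_i)+f(y_i)}$ (nonzero iff every element has even multiplicity in the combined multiset) is correct, and the partition into collision-free versus collision terms is a sensible entry point into the statistical step.

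There is, however, a gap in how you propose to finish. You assert that in the type basis $\{\ket{s(\vec{t})}\}$ the state $\rho_m$ ``has weights that depend on $\vec{t}$,'' i.e.\ is diagonal with type-dependent eigenvalues, and later refer to ``the diagonal-in-type-basis bookkeeping from \cite{brakerski2019pseudo}.'' In fact $\rho_m$ is \emph{not} diagonal in the type basis. Take $m=2$ and $\vec{x}=(a,a)$, $\vec{y}=(b,b)$ with $a\neq b$: the multiset $\{a,a,b,b\}$ has all even multiplicities, so your parity criterion assigns $\ket{\vec{x}}\bra{\vec{y}}$ a nonzero coefficient $2^{-nm}$, yet $\ket{a,a}$ and $\ket{b,b}$ belong to different types. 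These cross-type terms are precisely the ``collision'' contributions, and they are what makes the trace-norm estimate genuinely delicate: one cannot simply compare diagonal entries type by type, and a birthday bound on the trace of the collision block does not by itself control its trace norm, since that block is not obviously PSD. Both \cite{brakerski2019pseudo} and \cite{ananth2023pseudorandom} have to do real work to control these off-diagonal pieces (and the resulting bound is weaker than your stated $O(m^2/2^n)$, though still negligible for $m=\poly(n)$, which is all that is needed). Your instinct that the collision block is the crux, and that it is small because internal repetitions are rare, is correct — but the route you describe, which presumes diagonality, would not go through as written. Since the paper defers entirely to the citation, this does not conflict with anything the paper does; it is an issue internal to your proposed reconstruction.
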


\section{Quantum trapdoor functions}
\label{sec: qtfs}
\subsection{Definition}
\begin{definition}[Quantum trapdoor function] 
\label{def: qtf} A quantum trapdoor function (QTF) is a tuple of QPT algorithms $(\mathsf{Sample}, \mathsf{Eval}, \mathsf{Invert})$, where:
\begin{itemize}
\item $\mathsf{GenTR}(1^{n}) \rightarrow tr$: Takes as input a security parameter, and outputs a classical trapdoor $tr$.
\item $\mathsf{GenEV}(tr) \rightarrow \ket{eval}$: Takes as input a trapdoor $tr$, and outputs a state $\ket{eval}$. We require that there is a unique $\ket{eval}$ for each $tr$.
\item $\mathsf{Eval}(\ket{eval}, x) \rightarrow \ket{\phi}$: Takes as input an evaluation key $\ket{eval}$ and a classical string $x$, and outputs a quantum state $\ket{\phi}$.
\item $\mathsf{Invert}(tr, \ket{\phi}) \rightarrow x$: Takes as input a trapdoor $tr$ and a quantum state $\ket{\phi}$ and outputs a classical string $x$.
\end{itemize}
These algorithms should satisfy the following:
\begin{itemize}
\item[(a)] (Hard to invert) For any QPT algorithm $A$, for any $m = poly$, there exists a negligible function $\negl$ such that, for all $n\in \mathbb{N}$,
$$ \Pr\Big[ A\left(\mathsf{Eval}(\ket{eval}, x), \ket{eval}^{\otimes m(n)}\right) = x \,:\, tr \gets\mathsf{GenTR}(1^{n}), \ket{eval} \gets\mathsf{GenEV}(tr), x \leftarrow \{0,1\}^{n}\Big] \leq \negl(n) \,.$$
\item[(b)] (Trapdoor) For all $n \in \mathbb{N}$, $$\Pr\Big[\mathsf{Invert}\big(tr, \mathsf{Eval}(\ket{eval},x)\big)  = x  : tr \gets\mathsf{GenTR}(1^{n}), \ket{eval} \gets\mathsf{GenEV}(tr), x \leftarrow \{0,1\}^{n} \Big] = 1 \,.$$
\end{itemize}
\end{definition}

Note that requirement (b) is implicitly imposing that, for any fixed evaluation key, the induced map $x \rightarrow \ket{\psi_x}$ is ``\emph{injective}'', in the sense that for any honestly generated $\ket{\psi_x}$, there is a \emph{unique} ``inverse'' $x$.

\subsection{Construction}
\begin{theorem}
\label{thm: main}
A quantum trapdoor function exists, assuming the existence of any quantum-secure one-way function.
\end{theorem}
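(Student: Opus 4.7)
The plan is to instantiate precisely the construction sketched in the technical overview. Let $\prf$ be a quantum-secure PRF with one-bit output (which exists from any quantum-secure one-way function by \cite{zhandry2012construct}). Define $\mathsf{GenTR}(1^n)$ to sample a PRF key $k \gets \mathcal{K}_n$ and output $tr = k$; define $\mathsf{GenEV}(k)$ to output $\ket{\PRS(k)} = \frac{1}{\sqrt{2^n}}\sum_{y \in \{0,1\}^n}(-1)^{\prf(k,y)}\ket{y}$ as in Construction \ref{const: prs binary phase}; define $\mathsf{Eval}(\ket{eval}, x) = Z^x \ket{eval}$; and define $\mathsf{Invert}(k, \ket{\phi})$ to apply the diagonal unitary $G_{\prf}(k): \ket{y} \mapsto (-1)^{\prf(k,y)}\ket{y}$ followed by $H^{\otimes n}$, then measure in the standard basis and output the outcome. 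All four algorithms are clearly QPT.

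Correctness is a one-line calculation: since both $Z^x$ and $G_{\prf}(k)$ are diagonal in the standard basis, they commute, so on input $Z^x \ket{\PRS(k)}$ inversion produces $H^{\otimes n} Z^x \cdot \frac{1}{\sqrt{2^n}}\sum_{y}\ket{y} = H^{\otimes n}(H^{\otimes n}\ket{x}) = \ket{x}$, which is measured to give $x$ with probability one.

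For security, I would proceed by a two-step reduction. Suppose toward contradiction that some QPT adversary $A$ and some $m = poly(n)$ witness that requirement (a) fails, i.e.\ $A$ recovers $x$ from $(Z^x \ket{\PRS(k)}, \ket{\PRS(k)}^{\otimes m})$ with non-negligible probability $\varepsilon(n)$. Build a PRS distinguisher $D$ that, given $m+1$ copies of an unknown state (either $\ket{\PRS(k)}$ or a Haar-random $\ket{\psi}$), samples $x \gets \{0,1\}^n$, applies $Z^x$ to one of the copies, runs $A$ on the resulting tuple, and outputs $1$ iff $A$'s output equals $x$. In the PRS world $D$ outputs $1$ with probability exactly $\varepsilon(n)$, so by the PRS security guarantee (Definition \ref{def: prs}) $D$ must also output $1$ with non-negligible probability when fed Haar-random copies. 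In other words, we would obtain a procedure that, given a uniformly random mixed state from the ensemble
\[
\Big\{ \mathbb{E}_{\ket{\psi} \gets \textnormal{Haar}(2^n)} \left(Z^x \ket{\psi}\bra{\psi} Z^x \right) \otimes  (\ket{\psi}\bra{\psi})^{\otimes m} \Big\}_{x \in \{0,1\}^n},
\]
guesses $x$ with non-negligible probability.

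The remaining task is purely information-theoretic: show that the optimal success probability at this Haar state discrimination task is $\negl(n)$ (indeed exponentially small) whenever $m = poly(n)$; I would isolate this as Lemma \ref{lem: key}, and the theorem then follows by contradiction with the reduction above. To prove the lemma, my plan is to apply Lemma \ref{lem: pgm}, which reduces the problem to upper-bounding the PGM success probability (a square-root loss is still negligible). To compute the PGM explicitly, I would first use Lemma \ref{lem: haar sym subspace} to rewrite each $\rho_x$ as (a scalar times) $(Z^x \otimes I^{\otimes m})\sym^{2^n,m+1}(Z^x \otimes I^{\otimes m})$, then average the $\rho_x$ over $x$ using the Pauli-$Z$ twirl (Lemma \ref{lem: z twirl}) so that the average state $\sigma$ becomes diagonal on the first register in the standard basis; this should make $\sigma^{-1/2}$ tractable to compute in the $\{s(\vec{t})\}$ basis of the symmetric subspace (Lemma \ref{lem: basis}). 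The PGM success probability then reduces to a combinatorial sum indexed by types $\vec{t} \in \mathcal{I}_{2^n, m+1}$, which I expect to bound by $O(\mathrm{poly}(m)/2^n)$. This last combinatorial computation is the main obstacle: it is where the sharp contrast with the classical analogue (which is trivially solvable from two samples) must concretely manifest, and getting the right exponential-in-$n$ bound requires exploiting the symmetric-subspace structure carefully to show that no coherent ``comparison'' between $Z^x \ket{\psi}$ and the extra copies of $\ket{\psi}$ is possible without knowing $\ket{\psi}$.
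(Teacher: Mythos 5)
Your proposal matches the paper's proof exactly: the same construction (trapdoor $=$ PRF key, evaluation key $= \ket{\PRS(k)}$, evaluation by $Z^x$, inversion by $G_{\prf}(k)$ then $H^{\otimes n}$), the same correctness argument via commuting diagonal unitaries, and the same two-step security reduction (first replace the PRS state by a Haar-random state via Definition~\ref{def: prs}, then reduce to an information-theoretic state discrimination bound proved via the PGM, Lemma~\ref{lem: haar sym subspace}, Lemma~\ref{lem: z twirl}, and the type basis $\{\ket{s(\vec{t})}\}$ from Lemma~\ref{lem: basis}). The only piece you have not actually carried out is the combinatorial estimate bounding the PGM success probability, which is where the paper does the bulk of the work in the proof of Lemma~\ref{lem: key} (computing $\tilde{\sigma}^{-1/2}$ in the type basis, splitting the trace into the $Q_r$ pieces, and summing over cardinalities $|T^m_{(j,l),(j',l')}|$ of type sets to obtain $C\cdot(m/2^n + m^7/2^{3n})^{1/2}$), so while your route is the paper's route, the theorem is not yet fully proved without that computation.
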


We describe a construction of a quantum trapdoor function from a quantum-secure PRF. Let $\mathsf{PRF} = \{ \mathsf{PRF}_{n}  \}_{n \in \mathbb{N}}$ be a quantum-secure PRF, where $\prf_{n}: \mathcal{K}_{n} \times \{0,1\}^{n} \rightarrow \{0,1\}$. Let $(\mathsf{PRS.GenKey}, \mathsf{PRS.GenState})$ be the PRS from construction \ref{const: prs binary phase}, based on $\mathsf{PRF}$.

For $k \in \mathcal{K}_n$, let $G_{\mathsf{PRF}}(k)$ be the unitary that, for $y \in \{0,1\}^{n}$ acts as $\ket{y} \mapsto (-1)^{\prf_n(k,y)} \ket{y}$. From now on, we will drop the subscript $n$ in $\prf_n(k,y)$ for convenience.

Note that $G_{\mathsf{PRF}}(k)$ can be implemented efficiently by initializing an ancilla qubit in the state $\ket{-}$ and applying the unitary that computes $\prf(k,\cdot)$ into that qubit. 
\begin{construction}
\label{const: main}
Define $(\mathsf{GenTR}, \mathsf{GenEV}, \mathsf{Eval}, \mathsf{Invert})$ as follows:
    \begin{itemize}
    \item $\mathsf{GenTR}(1^{n}) \rightarrow tr$: Sample a key $k \gets \mathsf{PRS.GenKey}(1^n)$. Set $tr =k$.
    \item $\mathsf{GenEV}(tr) \rightarrow \ket{eval}$: Set $$\ket{eval} = \mathsf{PRS.GenState}(k) = \ket{PRS(k)}\,.$$
    \item $\mathsf{Eval}(\ket{eval}, x) \rightarrow \ket{\phi}$: Output $\ket{\phi} = Z^x \ket{eval}$.
    \item $\mathsf{Invert}(tr, \ket{\phi}) \rightarrow x$: Compute $H^{\otimes n} G_{\prf}(tr) \ket{\phi}$, and measure in the standard basis. Output the outcome $x$.
    \end{itemize}
\end{construction}

We show the following.
\begin{theorem}
\label{thm: 4}
    Assuming $\prf$ is a quantum-secure PRF, Construction \ref{const: main} is a quantum trapdoor function.
\end{theorem}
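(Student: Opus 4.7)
Every algorithm in Construction \ref{const: main} is QPT because $\prf$ is polynomial-time computable, so $G_{\prf}(tr)$ and $\mathsf{PRS.GenState}$ are efficient and $Z^x, H^{\otimes n}$ are Clifford. Property (b) of Definition \ref{def: qtf} follows by direct computation: since $Z^x$ and $G_{\prf}(k)$ are both diagonal in the standard basis they commute, and
\begin{equation*}
H^{\otimes n} G_{\prf}(k)\, Z^x \ket{\PRS(k)} \,=\, H^{\otimes n} \cdot \frac{1}{\sqrt{2^n}} \sum_{y \in \{0,1\}^n} (-1)^{x \cdot y + 2\prf(k,y)} \ket{y} \,=\, H^{\otimes n} H^{\otimes n} \ket{x} \,=\, \ket{x},
\end{equation*}
so the final standard-basis measurement returns $x$ with certainty.

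\textbf{Security via a reduction to Haar state discrimination.} For property (a), I would argue by contradiction. Suppose a QPT adversary $A$ recovers $x$ with non-negligible probability $\epsilon(n)$ given $(Z^x \ket{\PRS(k)}, \ket{\PRS(k)}^{\otimes m})$ for uniformly random $k$ and $x$. I would then build a PRS distinguisher $D$ that, on input $\ket{\phi}^{\otimes (m+1)}$, samples $x \gets \{0,1\}^n$, applies $Z^x$ to the first copy, feeds the resulting state to $A$, and outputs $1$ iff $A$ returns $x$. On $\ket{\PRS(k)}^{\otimes (m+1)}$-inputs $D$ accepts with probability $\epsilon(n)$, and on Haar inputs with some probability $\epsilon'(n)$ satisfying $|\epsilon - \epsilon'| \leq \negl(n)$ by Definition \ref{def: prs} together with the security of Construction \ref{const: prs binary phase}. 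Thus $\epsilon'(n)$ is also non-negligible, and it suffices to show that no algorithm, even information-theoretically, can guess $x$ non-negligibly well given a single draw from the ensemble \eqref{eq: state disc 2}.

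\textbf{The information-theoretic bound.} Let $\rho_x$ be the $x$-th state of \eqref{eq: state disc 2} and $\sigma = 2^{-n}\sum_x \rho_x$. By Lemma \ref{lem: pgm}, if the optimal success probability is $p$ then the PGM achieves at least $p^2$, so it is enough to upper bound the PGM's success by some $\alpha(n)$ and conclude $p \leq \sqrt{\alpha(n)}$. Using Lemma \ref{lem: haar sym subspace},
\begin{equation*}
\rho_x = \binom{2^n + m}{m+1}^{-1} (Z^x \otimes I^{\otimes m})\, \sym^{2^n, m+1}\, (Z^x \otimes I^{\otimes m}),
\end{equation*}
and the Pauli $Z$ twirl of Lemma \ref{lem: z twirl} simplifies $\sigma$ to $\binom{2^n+m}{m+1}^{-1} \sum_{j \in \{0,1\}^n}(\ket{j}\bra{j} \otimes I^{\otimes m}) \sym^{2^n, m+1} (\ket{j}\bra{j} \otimes I^{\otimes m})$. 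I would expand $\rho_x$ and $\sigma$ in the type basis of Lemma \ref{lem: basis} to diagonalize $\sigma$ explicitly, compute $\sigma^{-1/2}$, and then estimate $\operatorname{tr}(\sigma^{-1/2}\rho_x \sigma^{-1/2}\rho_x)$ by a counting argument over types.

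\textbf{Main obstacle.} The crux of the argument is this combinatorial estimate. It is where the quantum nature of the task really matters: the classical distributional analogue admits a trivial xor-based solver, but a Haar random quantum state carries no accessible label that would let one compare the first register to the others, and consequently the $\rho_x$ differ only very slightly. Turning this intuition into an exponential bound---which I expect to take the form $\poly(m)/2^{\Omega(n)}$---is precisely the content of Lemma \ref{lem: key}. Once that lemma is in hand, combining it with the square-root loss of Lemma \ref{lem: pgm} contradicts the non-negligibility of $\epsilon'(n)$ and completes the proof.
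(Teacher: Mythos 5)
Your outline reproduces the paper's proof faithfully, and correctly, up to the point where the real work begins. The trapdoor property follows by the same commutation argument (indeed $(-1)^{2\prf(k,y)}=1$, so $G_{\prf}(k)$ cancels the PRS phase even after $Z^x$ has been applied); the security reduction to a PRS distinguisher $D$ is exactly as in the paper; and the framing of the residual Haar-case guessing problem as a state-discrimination task attacked via the Pretty Good Measurement, the Pauli-$Z$ twirl (Lemma~\ref{lem: z twirl}), the symmetric-subspace characterization (Lemma~\ref{lem: haar sym subspace}), and the type basis (Lemma~\ref{lem: basis}) also matches the paper's setup.

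The genuine gap is that the proposal stops precisely where the technical content begins. You name the quantity $\operatorname{tr}(\sigma^{-1/2}\rho_x\sigma^{-1/2}\rho_x)$ and assert it should be bounded by ``a counting argument over types,'' then defer everything to Lemma~\ref{lem: key}. But in the paper, proving Lemma~\ref{lem: key} \emph{is} the body of the proof of Theorem~\ref{thm: 4}: one must write $\tilde\sigma^{-1/2}$ explicitly in terms of the fixed-type projectors $Q_r$, expand the trace into a leading term (giving $(m+1)/d$) plus cross terms, argue that the cross terms may be dropped by positive-semidefiniteness, reduce the remaining traces to cardinalities $|T^m_{(j,r+1),(j',r')}|$, and then bound the $j=j'$ and $j\neq j'$ contributions separately with careful binomial-ratio estimates. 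None of this is routine, and the proposal supplies none of it, so it should be read as a correct outline with the central estimate missing rather than as a complete proof. A smaller confusion: Lemma~\ref{lem: key} as stated already bounds an arbitrary POVM, with the quadratic loss of Lemma~\ref{lem: pgm} absorbed into the $(\cdot)^{1/2}$ in its conclusion; applying Lemma~\ref{lem: pgm} again ``once that lemma is in hand'' would double-count the square-root loss.
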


Since a quantum-secure PRF can be built from any quantum-secure one-way function \cite{zhandry2012construct}, Theorem \ref{thm: 4} implies Theorem \ref{thm: main}. The rest of Section \ref{sec: qtfs} is dedicated to proving Theorem \ref{thm: 4}.

The trapdoor property is straightforward to verify. Let $tr$ be in the support of $\mathsf{GenTR}(1^n)$, and let $\ket{eval} = \mathsf{GenEV}(tr)$. Then, $tr = k $ and $\ket{eval} = \ket{PRS(k)}$, for some $k$ in the support of $\mathsf{PRS.GenKey}(1^n)$. Then, for any $x$, we have
$$ \mathsf{Invert}(tr, \mathsf{Eval}(\ket{eval}, x)) = H^{\otimes n} G_{\prf}(k) Z^x \ket{PRS(k)}\,. $$
Notice that, by construction \ref{const: prs binary phase}, $\ket{PRS(k)} = G_\mathsf{PRF}(k) H^{\otimes n} \ket{0}^{\otimes n}$. Then,
$$ \mathsf{Invert}(tr, \mathsf{Eval}(\ket{eval}, x)) = H^{\otimes n}  G_\mathsf{PRF}(k) Z^x  G_\mathsf{PRF}(k) H^{\otimes n} \ket{0}^{\otimes n} \,.  $$
Crucially, notice that $Z^x$ commutes with $G_\mathsf{PRF}(k)$, since they are both diagonal in the standard basis, and moreover notice that $G_\mathsf{PRF}(k)$ is self-inverse. Thus, we have
$$ \mathsf{Invert}(tr, \mathsf{Eval}(\ket{eval}, x)) = H^{\otimes n} Z^x  H^{\otimes n} \ket{0}^{\otimes n} = \ket{x} \,. $$

The crucial part of this calculation is that $Z^x$ commutes with $G_\mathsf{PRF}(k)$, and thus that the $\mathsf{PRF}$ phase can be ``undone'' even \emph{after} the $Z^x$ phase is applied.

\subsection{Security}
\label{sec: security}
In this subsection, we show that Construction \ref{const: main} also satisfies property (a) from Definition \ref{def: qtf}, i.e.\ it is hard to invert without knowing the trapdoor. This will conclude the proof of Theorem \ref{thm: 4}, and thus of Theorem \ref{thm: main}.

Suppose for a contradiction that there exists a QPT algorithm $A$, a function $m = poly$, and a non-negligible function $\textsf{non-negl}$ such that, for all $n \in \mathbb{N}$,
\begin{equation} 
\label{eq: guess} 
\Pr\Big[ A\left(\mathsf{Eval}(\ket{eval}, x), \ket{eval}^{\otimes m(n)}\right) = x \,:\, tr \gets\mathsf{GenTR}(1^{n}), \ket{eval} \gets\mathsf{GenEV}(tr), x \leftarrow \{0,1\}^{n}\Big] \geq \textsf{non-negl}(n) \,.
\end{equation}
Using $A$, we will construct a distinguisher $D$ that breaks the security of the underlying PRS. Let $m$ be as in Equation \eqref{eq: guess}. $D$ is defined as follows, where we omit the security parameter for ease of notation:
\begin{itemize}
\item On input $\ket{\psi}^{\otimes m+1}$ (where $\ket{\psi}$ is either sampled according to the PRS or the Haar random distribution), sample $x \gets \{0,1\}^{n}$.
\item Give $(Z^x \otimes I^{\otimes m} )\ket{\psi}^{\otimes m+1}$ as input to $A$.
\item Let $x'$ be $A$'s output. If $x' = x$, guess that the input came from the PRS distribution. Otherwise, guess that it came from the Haar distribution.
\end{itemize}
Denote by $\Pr[x' = x | \ket{\psi} \gets \textnormal{PRS}]$ the probability that $A$'s guess $x'$ is equal to $x$, when $\ket{\psi}$ is sampled from the PRS distribution, and by $\Pr[x' = x | \ket{\psi} \gets \textnormal{Haar}]$ the analogous probability when $\ket{\psi}$ is sampled from the Haar distribution.
Notice that in the case that $\ket{\psi}$ is sampled from the PRS distribution, the input that $D$ provides to $A$ is distributed exactly as in Equation \eqref{eq: guess}. Thus, the probability that $D$ guesses correctly is
\begin{align}
\Pr[x' = x | \ket{\psi} \gets \textnormal{PRS}] \cdot \frac12 + \Pr[x' \neq x | \ket{\psi} \gets \textnormal{PRS}] \cdot  \frac12 = \textsf{non-negl}(n) \cdot \frac12 + \Pr[x' \neq x| \ket{\psi} \gets \textnormal{Haar}] \cdot \frac12 \,.
\end{align}
In particular, notice that $D$'s distinguishing advantage is non-negligible if $\Pr[x' = x | \ket{\psi} \gets \textnormal{Haar}] $ is negligible. We will show that the latter is the case, which thus implies that the distinguisher $D$ breaks the security of the PRS. 

The problem of guessing $x$, in the case where $\ket{\psi} \gets \textnormal{Haar}$, can be viewed as a ``state discrimination'' problem. Then, the fact that $\Pr[x' = x | \ket{\psi} \gets \textnormal{Haar}] $ is negligible is implied by the following more general lemma, which says that the (information-theoretically) optimal probability of guessing $x$ is exponentially small (when $m = poly(n)$).
\begin{lem}
\label{lem: key}
Let $n, m \in \mathbb{N}$ be such that $2^n>2 (m+1)$. Consider the ensemble of states:
$$\{ \rho_x \}_{x \in \{0,1\}^{n}} =  \Big\{\mathbb{E}_{\ket{\psi}\gets \textnormal{Haar}(2^n)}[(Z^x \otimes I^{\otimes m}) (\ket{\psi}\bra{\psi})^{\otimes m+1}(Z^x \otimes I^{\otimes m})]\Big\}_{x \in \{0,1\}^{n}} $$ 
Then, there is a constant $C>0$, such that, for any POVM $\{M_x\}_{x\in \{0,1\}^{n}}$,
$$ \mathbb{E}_{x\gets \{0,1\}^{n}} \textnormal{Tr}[M_x \rho_x] < C \cdot \left(\frac{m}{2^{n}} + \frac{m^7}{2^{3n}}\right)^{\frac12} \,.$$
\end{lem}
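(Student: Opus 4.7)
The plan is to use Lemma~\ref{lem: pgm} to reduce the task to upper-bounding the success probability $p_{\mathsf{PGM}}$ of the Pretty Good Measurement, since then the optimal guessing probability satisfies $p_{\mathrm{opt}} \leq \sqrt{p_{\mathsf{PGM}}}$. Setting $\sigma := \sum_{x}\rho_x$, one has $p_{\mathsf{PGM}} = 2^{-n}\sum_x \mathrm{Tr}[\sigma^{-1/2}\rho_x \sigma^{-1/2}\rho_x]$. My first step would be to exploit a symmetry: since $\sigma$ is invariant under conjugation by $Z^x \otimes I$ (being a sum over the full $Z^x$ orbit) and is diagonal in the standard basis of the first register, $\sigma^{-1/2}$ commutes with $Z^x \otimes I$. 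Using $(Z^x)^2 = I$ together with cyclicity of trace, a short calculation shows that $\mathrm{Tr}[\sigma^{-1/2}\rho_x \sigma^{-1/2}\rho_x]$ is independent of $x$ and equal to $T_0 := \mathrm{Tr}[\sigma^{-1/2}\rho_0 \sigma^{-1/2}\rho_0]$, so $p_{\mathsf{PGM}} = T_0$.

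Next I would compute $\sigma$ and $\sigma^{-1/2}$ explicitly. Applying the Pauli $Z$ twirl of Lemma~\ref{lem: z twirl} to average $\rho_x$ over $x$ dephases the first register, and combined with Lemma~\ref{lem: haar sym subspace} one obtains $\sigma = \sum_j \ket{j}\bra{j} \otimes A_j$, where $A_j = 2^n\, \mathbb{E}_\psi\, |\psi_j|^2 (\ket{\psi}\bra{\psi})^{\otimes m}$ acts on the symmetric subspace of $m$ copies and, by a direct computation in the basis $\{\ket{s(\vec u)}\}_{\vec u \in \mathcal I_{2^n, m}}$ from Lemma~\ref{lem: basis}, has eigenvalue proportional to $(u_j + 1)$. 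Hence $\sigma^{-1/2}$ is block-diagonal with explicit action on this basis. To compute $T_0$, I would use $\rho_0 = \binom{2^n+m}{m+1}^{-1}\sym^{2^n, m+1}$ to insert the projector onto the $(m{+}1)$-fold symmetric subspace on both sides of $\sigma^{-1/2}$. Applying the multinomial identity $\binom{m}{\vec t - e_j}/\binom{m+1}{\vec t} = t_j/(m+1)$ then shows that $\sym^{2^n, m+1}\,\sigma^{-1/2}\,\sym^{2^n, m+1}$ is diagonal in the basis $\{\ket{s(\vec t)}\}_{\vec t \in \mathcal I_{2^n, m+1}}$ with eigenvalues proportional to $\frac{1}{m+1}\sum_j \sqrt{t_j}$. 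Collecting constants yields the closed form
$$T_0 \;=\; \frac{1}{2^n (m+1)\binom{2^n+m}{m+1}} \sum_{\vec t \in \mathcal I_{2^n, m+1}} \Big(\sum_{j=1}^{2^n} \sqrt{t_j}\Big)^{\!2}.$$

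The last step, and the part I expect will be the main obstacle, is to bound this sum combinatorially. Expanding the square gives $\bigl(\sum_j \sqrt{t_j}\bigr)^{\!2} = (m+1) + \sum_{j\neq k}\sqrt{t_j t_k}$; the first piece contributes exactly $1/2^n$ to $T_0$, since $|\mathcal I_{2^n, m+1}| = \binom{2^n+m}{m+1}$. For the second, I would avoid the loss incurred by Jensen's inequality by using $\sqrt{t_j t_k} \leq t_j t_k$ whenever $t_j t_k \geq 1$ (and $=0$ otherwise), reducing the problem to computing $\mathbb{E}\sum_{j\neq k} t_j t_k$ under the uniform distribution on $\mathcal I_{2^n, m+1}$. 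This can be evaluated exactly via factorial moments of this distribution (for instance, $\mathbb{E}\binom{t_j}{2} = \binom{m+1}{2}/\binom{2^n+1}{2}$), yielding $\mathbb{E}\sum_{j\neq k} t_j t_k = O(m^2)$, and hence $T_0 = O(m/2^n)$. The hypothesis $2^n > 2(m+1)$ ensures the denominators are non-degenerate, and the additional $m^7/2^{3n}$ term in the lemma statement accounts for lower-order combinatorial slack in making these estimates fully quantitative. Taking square roots via Lemma~\ref{lem: pgm} then yields the claimed bound.
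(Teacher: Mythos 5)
Your proposal is correct, and it takes a genuinely different (and in fact cleaner and tighter) route than the paper after the common initial reduction. Both proofs start with the PGM and Lemma~\ref{lem: pgm}, and both compute $\sigma$ via the Pauli $Z$ twirl and Lemma~\ref{lem: haar sym subspace}, arriving at an eigendecomposition of $\sigma$ with eigenvalue $\propto (u_j+1)$ on $\ket{j}\otimes\ket{s(\vec u)}$. From there the paths diverge. The paper substitutes its explicit expression for $\tilde{\sigma}^{-1/2}$ into $\mathbb{E}_x\mathrm{Tr}[\rho_x\sigma^{-1/2}\rho_x\sigma^{-1/2}]$ and expands into a sum of cross-terms indexed by $(r,r')$, then bounds each term with separate combinatorial estimates on cardinalities of type classes; the resulting accounting of the $j\neq j'$ cross-terms is where the $m^7/2^{3n}$ slack comes from. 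You instead make two observations the paper does not use: first, $\sigma^{-1/2}$ commutes with $Z^x\otimes I$, so $\mathrm{Tr}[\sigma^{-1/2}\rho_x\sigma^{-1/2}\rho_x]$ is independent of $x$; second, after sandwiching $\sigma^{-1/2}$ by $\sym^{2^n,m+1}$, the resulting operator is \emph{diagonal} in the type basis $\{\ket{s(\vec t)}\}_{\vec t\in\mathcal{I}_{2^n,m+1}}$, with entry $(m+1)^{-1/2}\sum_j\sqrt{t_j}$ (times a normalization). Indeed $\tilde\sigma^{-1/2}\ket{s(\vec t)}=\sum_{k:t_k\geq 1}\ket{k}\otimes\ket{s(\vec t_{-k})}$, and the off-diagonal inner product vanishes because $\vec t_{-k}=\vec t'_{-k}$ forces $\vec t=\vec t'$. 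This yields the exact closed form $T_0 = \frac{1}{2^n(m+1)\binom{2^n+m}{m+1}}\sum_{\vec t}\bigl(\sum_j\sqrt{t_j}\bigr)^2$, and the elementary bound $\sqrt{t_jt_k}\leq t_jt_k$ together with $\mathbb{E}\sum_{j\neq k}t_jt_k=(m+1)^2-\mathbb{E}\sum_j t_j^2\leq (m+1)^2$ gives $T_0\leq (m+2)/2^n$. This is actually \emph{stronger} than the paper's $O(m/2^n + m^7/2^{3n})$ bound on the PGM probability — your $m^7/2^{3n}$ term is not needed at all, it is just absorbed by the lemma's stated (weaker) right-hand side. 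So what your approach buys is an exact diagonalization that collapses the cross-term bookkeeping; what the paper's approach buys is, arguably, nothing here, and one could view your argument as a simplification.
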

Since in our case $m = poly(n)$, the bound in Lemma \ref{lem: key} is exponentially small in $n$, and hence \\$\Pr[x' = x | \ket{\psi} \gets \textnormal{Haar}]$ is exponentially small in $n$. This implies that the distinguisher $D$ described above breaks the security of the PRS, giving us a contradiction. 

Thus, to conclude our proof of security, i.e.\ that Construction \ref{const: main} satisfies property (a) from Definition \ref{def: qtf}, we are left with proving Lemma \ref{lem: key}.

\begin{proof}[Proof of Lemma \ref{lem: key}]
We consider the ``Pretty Good Measurement'' (PGM) for this discrimination task. We show that the PGM achieves a guessing probability that is $C' \cdot \left(\frac{m}{2^{n}} + \frac{m^6}{2^{2n}} \right)$ for some constant $C'>0$. By Lemma \ref{lem: pgm}, this implies the desired bound of Lemma \ref{lem: key}.

Let $\sigma = \sum_{x \in \{0,1\}^{n}} \rho_x$, and let $\sigma^{-1}$ be its pseudoinverse. The PGM for this discrimination task is $\{M_x\}_{x\in \{0,1\}^{n}} \cup \{M_{\perp}\}$ where $M_x = \sigma^{-\frac12} \rho_x \sigma^{-\frac12}$ and $M_{\perp} = \mathds{1}_{\textnormal{Ker}(\sigma)}$ is the projection onto the kernel of $\sigma$. For ease of notation, let $d = 2^{n}$. For $N \in \mathbb{N}$, denote by $\sym^{d, N}$ the projector onto the symmetric subspace of $(\mathbb{C}^d)^{\otimes N}$.

Now, notice that 
\begin{align}
\sigma &= \sum_x (Z^x\otimes I) \mathbb{E}_{\ket{\psi} \leftarrow \textnormal{Haar}(d)} [\ket{\psi}\bra{\psi}^{\otimes m+1}] (Z^x \otimes I) \\
&= \binom{d+m}{m+1}^{-1} \cdot \sum_{x \in \{0,1\}^n} (Z^x \otimes I) \sym^{d,m+1} (Z^x \otimes I)\\
&= \binom{d+m}{m+1}^{-1} \cdot d \sum_{j \in [d]} (\ket{j}\bra{j} \otimes I) \sym^{d,m+1} (\ket{j}\bra{j} \otimes I) \,,
\end{align}
where the second equality is by Lemma \ref{lem: haar sym subspace}, and the third equality is by Lemma \ref{lem: z twirl}.

Now, define $\tilde{\sigma} = \sum_j (\ket{j}\bra{j} \otimes I)\sym^{d,m+1}( \ket{j}\bra{j} \otimes I) $, and let $\tilde{\sigma}^{-1}$ be its pseudoinverse. Then, we have
$$ \sigma^{-1} = \binom{d+m}{m+1} \cdot \frac{1}{d} \cdot \tilde{\sigma}^{-1} \,.$$

In order to compute $\tilde{\sigma}^{-1}$, we take a closer look at $\tilde{\sigma}$ and write it in terms of an eigenbasis. To do so, we first consider the convenient orthonormal basis for the symmetric subspace described in Subsection \ref{sec: haar}, which we recall here. For $N \in \mathbb{N}$, define $\mathcal{I}_{d,N} = \{(t_1, .., t_d): t_1,\cdots, t_d \in \mathbb{N}, t_1+\cdots+t_d = N\}$. For a vector $\vec{j} =(j_1,\ldots, j_{N}) \in [d]^{N}$, denote 
by $T(\vec{j})$ its \emph{type}, i.e. $T(\vec{j}) $ is defined to be the vector in $\mathcal{I}_{d,N}$ whose $i$-th entry is the number of times $i$ appears in the string $(j_1,\ldots, j_N)$. For $\vec{t} \in \mathcal{I}_{d,N}$, define
$$ s(\vec{t}) := \binom{N}{\vec{t}}^{-\frac12} \sum_{\vec{j}: T(\vec{j}) = \vec{t}} \ket{j_1,\ldots, j_{N}} \,,$$
where $\binom{N}{\vec{t}} = \frac{N!}{t_1!\ldots t_d!}$. The content of Lemma \ref{lem: basis} is that $\{\ket{s(\vec{t})}: \vec{t} \in \mathcal{I}_{d,N}\}$ is an orthonormal basis for the symmetric subspace of $(\mathbb{C}^d)^{\otimes N}$. This implies that $\sym^{d, N} = \sum_{\vec{t} \in \mathcal{I}_{d,N}} \ket{s(\vec{t})}\bra{s(\vec{t})}$.

Now, for $j \in [d]$, $r \in \{0,\ldots, m\}$, let $T^m_{j,r} = \{\vec{t} \in \mathcal{I}_{d,m}: t_j = r \}$. Moreover, for $\vec{t} = (t_1,\ldots, t_d) \in \mathcal{I}_{d,m+1}$ with $t_j \geq 1$, define $\vec{t}_{-j} \in \mathcal{I}_{d, m}$ to be such that its $i$-th entry is 
\begin{equation*}
t'_i := \begin{cases} 
t_i\,\,\, \textnormal{ if } i \neq j\\
t_i-1 \,\,\,\textnormal{ if } i = j
\end{cases}
\end{equation*} 
In other words, $\vec{t}_{-j}$ is identical to $\vec{t}$ except that the $j$-th entry is reduced by $1$ (and hence $\vec{t}_{-j} \in \mathcal{I}_{d,m}$). Then, notice that, for any $\vec{t} \in \mathcal{I}_{d,m+1}$,
\begin{align} 
\ket{s(\vec{t})}  &= \binom{m+1}{\vec{t}}^{-\frac12} \sum_{k \in [d]: t_k \geq 1} \ket{k}\otimes \sum_{\vec{j} \in [d]^m: T(\vec{j}) = \vec{t}_{-k}} \ket{j_1,\ldots, j_{m}} \nonumber \\
&= \binom{m+1}{\vec{t}}^{-\frac12} \sum_{k: t_k \geq 1} \ket{k} \otimes \binom{m}{\vec{t}_{-k}}^{\frac12} \ket{s(\vec{t}_{-k})} \nonumber \\
&= \sum_{k: t_k \geq 1} \sqrt{\frac{t_k}{m+1}} \ket{k}\otimes  \ket{s(\vec{t}_{-k})} \,. \label{eq: 155}
\end{align}
Then, we have
\begin{align}
\tilde{\sigma} &=  \sum_{j \in [d]} (\ket{j}\bra{j} \otimes I) \sym^{d,m+1}( \ket{j}\bra{j} \otimes I) \nonumber \\
&= \sum_{j \in [d]}(\ket{j}\bra{j} \otimes I)\Bigg( \sum_{\vec{t} \in \mathcal{I}_{d,m+1}} \ket{s(\vec{t})}\bra{s(\vec{t})} \Bigg) ( \ket{j}\bra{j} \otimes I) \nonumber \\
&= \sum_{j \in [d]} (\ket{j}\bra{j} \otimes I)\Bigg( \sum_{\vec{t} \in \mathcal{I}_{d,m+1}}  \sum_{k: t_k \geq 1}  \sum_{k': t_{k'} \geq 1} \sqrt{\frac{t_k}{m+1}} \cdot  \sqrt{\frac{t_{k'}}{m+1}} \ket{k}\bra{k'} \otimes \ket{s(\vec{t}_{-k})}\bra{s(\vec{t}_{-k'})}     \Bigg)     ( \ket{j}\bra{j} \otimes I) \nonumber \\
&= \sum_{\vec{t} \in \mathcal{I}_{d,m+1}} \sum_{j: t_j \geq 1} \frac{t_j}{m+1} \ket{j}\bra{j} \otimes \ket{s(\vec{t}_{-j})}\bra{s(\vec{t}_{-j})}  \nonumber\\
&= \sum_{j \in [d]} \ket{j}\bra{j} \otimes \sum_{\vec{t} \in \mathcal{I}_{d,m+1}: t_j \geq 1}  \frac{t_j}{m+1} \ket{s(\vec{t}_{-j})}\bra{s(\vec{t}_{-j})}  \nonumber\\
&= \sum_{j \in [d]} \ket{j}\bra{j} \otimes \sum_{r=0}^m \frac{r+1}{m+1} \sum_{\vec{t} \in T^m_{j,r}} \ket{s(\vec{t})}\bra{s(\vec{t})} \nonumber\\
&= \frac{1}{m+1} \sum_{j \in [d]} \ket{j}\bra{j} \otimes \sum_{\vec{t} \in T^m_{j,0}}  \ket{s(\vec{t})}\bra{s(\vec{t})} + \frac{1}{m+1} \sum_{j \in [d]} \ket{j}\bra{j} \otimes \sum_{r=1}^m (r+1) \cdot \sum_{\vec{t} \in T^m_{j,r}}  \ket{s(\vec{t})}\bra{s(\vec{t})} \,,\label{eq: 120}
\end{align}
where the third equality uses Equation \eqref{eq: 155}.

Equation \eqref{eq: 120} implies that $\{\ket{j} \otimes \ket{s(\vec{t})}: j \in[d], \vec{t} \in \mathcal{I}_{d,m}\}$ is exactly the set of eigenvectors of $\tilde{\sigma}$ with non-zero eigenvalue. Hence, the pseudoinverse $\tilde{\sigma}$ is obtained by taking the inverse of the (non-zero) elements on the diagonal. Thus, we have 
$$\tilde{\sigma}^{-1} = (m+1)\cdot \sum_{j \in [d]} \ket{j}\bra{j} \otimes \sum_{\vec{t} \in T^m_{j,0}}  \ket{s(\vec{t})}\bra{s(\vec{t})} + \frac{m+1}{r+1} \sum_{j \in [d]} \ket{j}\bra{j} \otimes \sum_{r=1}^m  \cdot \sum_{\vec{t} \in T^m_{j,r}}  \ket{s(\vec{t})}\bra{s(\vec{t})} \,.$$
Then,
$$ \tilde{\sigma}^{-\frac12} = \sqrt{m+1}\cdot \sum_{j \in [d]} \ket{j}\bra{j} \otimes \sum_{\vec{t} \in T^m_{j,0}}  \ket{s(\vec{t})}\bra{s(\vec{t})} + \sqrt{\frac{m+1}{r+1}}\sum_{j \in [d]} \ket{j}\bra{j} \otimes \sum_{r=1}^m \cdot \sum_{\vec{t} \in T^m_{j,r}}  \ket{s(\vec{t})}\bra{s(\vec{t})} \,.$$
Notice that, for any $j \in [d]$, we can write $\sym^{d,m} = \sum_{\vec{t} \in \mathcal{I}_{d,m}} \ket{s(\vec{t})}\bra{s(\vec{t})} = \sum_{r = 0}^m \sum_{\vec{t} \in T^m_{j,r}}\ket{s(\vec{t})}\bra{s(\vec{t})}$. Then, we have
\begin{equation}
\label{eq: 8}
\tilde{\sigma}^{-\frac12} = \sqrt{m+1} \cdot \mathds{1} \otimes \sym^{d,m} - \sqrt{m+1} \cdot  \sum_{r=1}^m \left(1-\frac{1}{\sqrt{r+1}}\right) \cdot \sum_{j \in [d]} \ket{j}\bra{j} \otimes \sum_{\vec{t} \in T^m_{j,r}}  \ket{s(\vec{t})}\bra{s(\vec{t})} \,.
\end{equation}

For convenience, for $r \in \{0,\ldots,m\}$, define $Q_r = \sum_{j \in [d]} \ket{j}\bra{j} \otimes \sum_{\vec{t} \in T^m_{j,r}}  \ket{s(\vec{t})}\bra{s(\vec{t})} $. The probability of guessing $x$ when using the PGM is
\begin{align}
\label{eq: 10}
& \mathbb{E}_{x\gets \{0,1\}^{n}} \Tr{\rho_x \sigma^{-\frac12} \rho_x \sigma^{-\frac12}} \nonumber \\
&= \binom{d+m}{m+1} \cdot \frac{1}{d} \cdot \mathbb{E}_{x\gets \{0,1\}^{n}} \Tr{\rho_x \tilde{\sigma}^{-\frac12} \rho_x \tilde{\sigma}^{-\frac12}}  \nonumber \\
&= \binom{d+m}{m+1} \cdot \frac{m+1}{d}  \cdot \mathbb{E}_{x\gets \{0,1\}^{n}} \Tr{\rho_x \left(\mathds{1} \otimes \sym^{d,m} \right)\rho_x \left(\mathds{1} \otimes \sym^{d,m}\right)} \nonumber\\
&+ \binom{d+m}{m+1} \cdot \frac{m+1}{d} \cdot \sum_{r,r'=1}^m \left(1- \frac{1}{\sqrt{r+1}}\right)\left(1-\frac{1}{\sqrt{r'+1}}\right) \mathbb{E}_{x\gets \{0,1\}^{n}} \Tr{\rho_x Q_r \rho_x Q_{r'}} \nonumber\\
&-2\binom{d+m}{m+1} \cdot \frac{m+1}{d} \cdot \sum_{r=1}^m \left(1- \frac{1}{\sqrt{r+1}}\right) \mathbb{E}_x \Tr{\rho_x \left(\mathds{1} \otimes \sym^{d,m} \right) \rho_x Q_r} \,.
\end{align}
where the last equality follows by using Equation \eqref{eq: 8} to replace the two appearances of $\tilde{\sigma}^{-\frac12}$. Now, note that, by Lemma \ref{lem: haar sym subspace}, $\rho_x = \binom{d+m}{m+1}^{-1} (Z^x \otimes \mathds{1}) \sym^{d, m+1} (Z^x \otimes \mathds{1})$. Further, note that $Z^x \otimes \mathds{1}$ commutes with both $\mathds{1} \otimes \sym^{d,m}$ and $Q_r$. Substituting the last expression for $\rho_x$ into Equation \eqref{eq: 10}, and noticing that the $Z^x$'s cancel each other out thanks to the commutation, we obtain:
\begin{align}
&\mathbb{E}_{x\gets \{0,1\}^{n}} \Tr{\rho_x \sigma^{-\frac12} \rho_x \sigma^{-\frac12}} \nonumber\\
&= \binom{d+m}{m+1}^{-1} \cdot \frac{m+1}{d}  \cdot \mathbb{E}_{x\gets \{0,1\}^{n}} \Tr{\sym^{d,m+1}\left(\mathds{1} \otimes \sym^{d,m} \right)\sym^{d,m+1} \left(\mathds{1} \otimes \sym^{d,m}\right)} \nonumber \\
&+ \binom{d+m}{m+1}^{-1} \cdot \frac{m+1}{d} \cdot \sum_{r,r'=1}^m \left(1- \frac{1}{\sqrt{r+1}}\right)\left(1-\frac{1}{\sqrt{r'+1}}\right) \mathbb{E}_{x\gets \{0,1\}^{n}} \Tr{\sym^{d,m+1} Q_r \sym^{d,m+1} Q_{r'}} \nonumber \\
&-2\binom{d+m}{m+1}^{-1} \cdot \frac{m+1}{d} \cdot \sum_{r=1}^m \left(1- \frac{1}{\sqrt{r+1}}\right) \mathbb{E}_x \Tr{\sym^{d,m+1} \left(\mathds{1} \otimes \sym^{d,m} \right) \sym^{d,m+1} Q_r} \,. \nonumber
\end{align}
Clearly both $\sym^{d,m+1} \left(\mathds{1} \otimes \sym^{d,m} \right) \sym^{d,m+1}$ and $Q_r$ are positive semidefinite. Using the fact that $\textnormal{Tr}[AB] \geq 0 $ if $A$ and $B$ are positive semidefinite, we have that $\Tr{\sym^{d,m+1} \left(\mathds{1} \otimes \sym^{d,m} \right) \sym^{d,m+1} Q_r} \geq 0$. Thus, we have
\begin{align}
&\mathbb{E}_{x\gets \{0,1\}^{n}} \Tr{\rho_x \sigma^{-\frac12} \rho_x \sigma^{-\frac12}} \nonumber\\
&\leq \binom{d+m}{m+1}^{-1} \cdot \frac{m+1}{d}  \cdot \Tr{\sym^{d,m+1}\left(\mathds{1} \otimes \sym^{d,m} \right)\sym^{d,m+1} \left(\mathds{1} \otimes \sym^{d,m}\right)} \nonumber \\
&+ \binom{d+m}{m+1}^{-1} \cdot \frac{m+1}{d} \cdot \sum_{r,r'=1}^m \left(1- \frac{1}{\sqrt{r+1}}\right)\left(1-\frac{1}{\sqrt{r'+1}}\right)  \Tr{\sym^{d,m+1} Q_r \sym^{d,m+1} Q_{r'}}  \nonumber \\
&\leq \binom{d+m}{m+1}^{-1} \cdot \frac{m+1}{d}  \cdot \Tr{\sym^{d,m+1}\left(\mathds{1} \otimes \sym^{d,m} \right)\sym^{d,m+1} \left(\mathds{1} \otimes \sym^{d,m}\right)} \label{eq: 11}\\
&+ \binom{d+m}{m+1}^{-1} \cdot \frac{m+1}{d} \cdot \sum_{r,r'=1}^m \Tr{\sym^{d,m+1} Q_r \sym^{d,m+1} Q_{r'}} \,, \label{eq: 12}
\end{align}
where the last line follows from the fact that, for any $r,r'$, $\Tr{\sym^{d,m+1} Q_r \sym^{d,m+1} Q_{r'}} \geq 0$ since both $\sym^{d,m+1} Q_r \sym^{d,m+1} $ and $Q_{r'}$ are positive semidefinite.
Next, we bound each of the terms \eqref{eq: 11} and \eqref{eq: 12} separately. First, notice that $$\Tr{\sym^{d,m+1}\left(\mathds{1} \otimes \sym^{d,m} \right)\sym^{d,m+1} \left(\mathds{1} \otimes \sym^{d,m}\right)} = \Tr{\sym^{d, m+1}}\,.$$
Thus,
\begin{align}
\eqref{eq: 11}
&= \binom{d+m}{m+1}^{-1} \cdot \frac{m+1}{d} \cdot \Tr{\sym^{d,m+1}} \nonumber \\
&= \binom{d+m}{m+1}^{-1} \cdot \frac{m+1}{d} \cdot \binom{d+m}{m+1} \nonumber\\
&= \frac{m+1}{d} \,,
\end{align}
where the second equality is again due to Lemma \ref{lem: haar sym subspace}.

Define, for any $j \in [d]$ and $r \in [m]$, $Q_{j, r} :=  \sum_{\vec{t} \in T^m_{j,r}}  \ket{s(\vec{t})}\bra{s(\vec{t})}$. Then, $Q_r =\sum_{j \in [d]} \ket{j}\bra{j} \otimes Q_{j, r}  $. We have,
\begin{align}
\eqref{eq: 12} &\leq \binom{d+m}{m+1}^{-1} \cdot \frac{m+1}{d} \cdot \sum_{r,r'=1}^m  \Tr{\sym^{d,m+1} Q_r \sym^{d,m+1} Q_{r'}} \nonumber\\
&= \binom{d+m}{m+1}^{-1} \cdot \frac{m+1}{d} \cdot \sum_{r,r'=1}^m \sum_{j, j' \in [d]} \Tr{\sym^{d,m+1} \left(\ket{j}\bra{j} \otimes Q_{j,r} \right) \sym^{d,m+1} \left(\ket{j'}\bra{j'} \otimes Q_{j',r'} \right)} \nonumber\\
&= \binom{d+m}{m+1}^{-1} \cdot \frac{m+1}{d} \cdot \sum_{r,r'=1}^m \sum_{j, j' \in [d]} \Tr{\ket{j'}\bra{j'} \sym^{d,m+1} \ket{j}\bra{j}  \left( \mathds{1} \otimes Q_{j,r} \right) \ket{j}\bra{j} \sym^{d,m+1} \ket{j'}\bra{j'} \left( \mathds{1} \otimes Q_{j',r'} \right)} \,, \label{eq: 13}
\end{align}
where in the last line we are writing $\ket{j}\bra{j}$ and $\ket{j'}\bra{j'}$ as short for $\ket{j}\bra{j} \otimes \mathds{1}$ and $\ket{j'}\bra{j'} \otimes \mathds{1}$ respectively.

For $n \in \mathbb{N}$, $j, j' \in [d]$ and $l, l' \in \{0,\ldots,n\}$, define $T^{n}_{(j,l), (j',l')} := \{\vec{t} \in \mathcal{I}_{d,n}: t_j = l \textnormal{ and } t_{j'} = l'\}$. Then, for any $j,j' \in [d]$, we can write 
\begin{equation}
\label{eq: 14}
\sym^{d, m+1} = \sum_{l,l'=0}^{m+1} \,\, \sum_{\vec{t} \in T^{m+1}_{(j,l), (j',l')} } \ket{s(\vec{t})} \bra{s(\vec{t})} \,.
\end{equation}
Recall that, for $j \in [d]$ and $\vec{t} = (t_1,\ldots, t_d) \in \mathcal{I}_{d,m+1}$ such that $t_j \geq 1$, we defined $\vec{t}_{-j} \in \mathcal{I}_{d, m}$ to be identical to $\vec{t}$ except that the $j$-th entry is reduced by $1$. Then, combining Equations \eqref{eq: 14} and \eqref{eq: 155} implies
\begin{equation}
\label{eq: 16}
\ket{j'}\bra{j'} \sym^{d,m+1} \ket{j}\bra{j} = \ket{j'}\bra{j} \otimes \sum_{l, l' \geq 1}^{m+1}  \sqrt{\frac{l}{m+1}} \cdot \sqrt{\frac{l'}{m+1}}  \sum_{\vec{t} \in T^{m+1}_{(j',l'),(j,l)}} \ket{s(\vec{t}_{-j'})}\bra{s(\vec{t}_{-j})}
\end{equation}
For ease of notation, for $j, j' \in [d]$ and $l,l' \in \{1,\ldots,m+1\}$, let $P_{(j',l'), (j,l)}:= \sum_{\vec{t} \in T^{m+1}_{(j',l'),(j,l)}} \ket{s(\vec{t}_{-j'})}\bra{s(\vec{t}_{-j})}$.
Then, plugging \eqref{eq: 16} into \eqref{eq: 13} (twice) and simplifying, we obtain
\begin{align}
\eqref{eq: 12} &\leq \binom{d+m}{m+1}^{-1} \cdot \frac{1}{d  (m+1)} \cdot \sum_{r,r'=1}^m \sum_{j,j' \in [d]} \sum_{\substack{l,l' \geq 1 \\ l'', l''' \geq 1}}^{m+1} \sqrt{l \cdot l' \cdot l'' \cdot l'''} \Tr{\ket{j'}\bra{j'} \otimes \left(P_{(j',l'), (j,l)} Q_{j,r} P_{(j,l''), (j',l''')}Q_{j',r'}\right) }\nonumber\\
&= \binom{d+m}{m+1}^{-1} \cdot \frac{1}{d  (m+1)} \cdot \sum_{r,r'=1}^m \sum_{j,j' \in [d]} \sum_{\substack{l,l' \geq 1 \\ l'', l''' \geq 1}}^{m+1} \sqrt{l \cdot l' \cdot l'' \cdot l'''} \Tr{P_{(j',l'), (j,l)} Q_{j,r} P_{(j,l''), (j',l''')}Q_{j',r'} }
\end{align}
%We are only looking to bound the absolute value of \eqref{eq: 12}. 
We can further simplify the expression using $l,l',l'',l''' \leq m+1$ to obtain
\begin{align}
\eqref{eq: 12} &\leq \binom{d+m}{m+1}^{-1} \cdot \frac{m+1}{d} \cdot \sum_{r,r'=1}^m \sum_{j,j' \in [d]} \sum_{\substack{l,l' \geq 1 \\ l'', l''' \geq 1}}^{m+1} \left|\Tr{P_{(j',l'), (j,l)} Q_{j,r} P_{(j,l''), (j',l''')}Q_{j',r'} }\right| \label{eq: 18}
\end{align}
Observe that 
\begin{align}
   Q_{j,r} P_{(j,l''), (j',l''')}Q_{j',r'} = \delta_{r = l''-1} \cdot \delta_{r' = l'''-1} \cdot \sum_{\vec{t} \in T^{m+1}_{(j,l''),(j',l''')}} \ket{s(\vec{t}_{-j})}\bra{s(\vec{t}_{-j'})} \,
\end{align}
Then, with a similar observation, and additionally using the ciclicity of trace, we have
\begin{align}
\Tr{P_{(j',l'), (j,l)} Q_{j,r} P_{(j,l''), (j',l''')}Q_{j',r'}} &= \delta_{r = l -1= l''-1} \cdot \delta_{r' = l'''-1 = l'-1} \cdot \Tr{\sum_{\vec{t} \in T^{m+1}_{(j,l''),(j',l''')}} \ket{s(\vec{t}_{-j'})}\bra{s(\vec{t}_{-j'})}} \nonumber\\
&= \delta_{r = l -1= l''-1} \cdot \delta_{r' = l'''-1 = l'-1} \cdot \Tr{\sum_{\vec{t} \in T^{m}_{(j,l''),(j',l'''-1)}} \ket{s(\vec{t})}\bra{s(\vec{t})}} \nonumber\\
&= \delta_{r = l -1= l''-1} \cdot \delta_{r' = l'''-1 = l'-1} \cdot | T^{m}_{(j,l''),(j',l'''-1)} | \,, \label{eq: 223}
\end{align}
where $ |T^{m}_{(j,l''),(j',l'''-1)}|$ is the cardinality of the set $T^{m}_{(j,l''),(j',l'''-1)}$.

Plugging \eqref{eq: 223} into \eqref{eq: 18}, and simplifying, gives
\begin{align}
\eqref{eq: 12} &\leq \binom{d+m}{m+1}^{-1} \cdot \frac{m+1}{d} \cdot \sum_{r,r'=1}^m \sum_{j,j' \in [d]} |T^{m}_{(j,r+1),(j',r')}| \nonumber\\
&=\binom{d+m}{m+1}^{-1} \cdot \frac{m+1}{d} \cdot \sum_{r,r'=1}^m \left(\sum_{j \in [d]} |T^{m}_{(j,r+1),(j,r')}|  + \sum_{j \neq j' \in [d]} |T^{m}_{(j,r+1),(j',r')}| \right) \,. \label{eq: 22}
\end{align}
Now, let's consider the first summand in the last expression of \eqref{eq: 22}. Notice that $T^{m}_{(j,r+1),(j,r')} = \emptyset$ whenever $r+1 \neq r'$. When $r+1 = r'$, we have $T^{m}_{(j,r+1),(j,r')} = T^m_{j, r'}$. Notice that the size of the latter is just equal to the dimension of the symmetric subspace of $(\mathbb{C}^{d-1})^{\otimes m-r'}$ (since $j$ must appear exactly $r'$ times). Thus, 
$$ |T^{m}_{(j,r+1),(j,r')}| = \delta_{r+1 = r'} \cdot |T^m_{j, r'}| = \delta_{r+1 = r'}  \cdot \binom{d+m-r'-2}{m-r'} \,.$$
Hence, we have 
\begin{align}
   & \binom{d+m}{m+1}^{-1} \cdot \frac{m+1}{d} \cdot \sum_{r,r'=1}^m \sum_{j \in [d]} |T^{m}_{(j,r+1),(j,r')}| \nonumber\\
    &= \binom{d+m}{m+1}^{-1} \cdot \frac{m+1}{d} \cdot \sum_{r'=2}^m \sum_{j \in [d]}  \binom{d+m-r'-2}{m-r'}  \nonumber\\
    &= \sum_{r'=2}^m \binom{d+m}{m+1}^{-1} \cdot (m+1)  \binom{d+m-r'-2}{m-r'} \nonumber \\
    &= \sum_{r'=2}^m \frac{(m+1)!(d-1)!}{(d+m)!}\cdot (m+1) \cdot \frac{(d+m-r'-2)!}{(m-r')!(d-2)!} \nonumber\\
    &= \sum_{r'=2}^m\frac{(m+1) \cdots (m-r'+1) \cdot (d-1)}{(d+m)\cdots(d+m-r'-1)}  \cdot (m+1)\nonumber\\
   &\leq \sum_{r'=2}^m \frac{d-1}{d+m-r'-1} \cdot (m+1) \cdot \left(\frac{m+1}{d}\right)^{r'+1} \nonumber\\ 
   &\leq (m+1) \cdot \sum_{r'=2}^m \left(\frac{m+1}{d}\right)^{r'+1} \nonumber \\
   &= (m+1) \cdot \left(\frac{m+1}{d}\right)^3 \cdot \sum_{r=0}^{m-2} \left(\frac{m+1}{d}\right)^r \nonumber \\
   & = \frac{(m+1)^4}{d^3} \cdot \frac{1-\left(\frac{m+1}{d}\right)^{m-1}}{1-\frac{m+1}{d}} \nonumber \\
   &\leq \frac{(m+1)^4}{d^3} \cdot \frac{1}{1-\frac{m+1}{d}} \quad \quad \quad \textnormal{(as long as $d > m+1$)} \nonumber \\
   &= \frac{(m+1)^4}{d^2(d-m-1)} 
   \leq 2 \frac{(m+1)^4}{d^3} \quad \quad \quad \textnormal{(as long as $d > 2(m+1)$)}
   \,. \label{eq: 23}
\end{align}

Next, consider the second summand in the last expression of \eqref{eq: 22} (corresponding to $j \neq j'$). Notice that $T^{m}_{(j,r+1),(j',r')} = \emptyset$ whenever $r +r' \geq m$. When $r+r' \leq m-1$, the size of $T^{m}_{(j,r+1),(j',r')}$ is equal to the dimension of the symmetric subspace of $(\mathbb{C}^{d-2})^{\otimes m-r-r'-1}$ (since $j$ must appear exactly $r+1$ times, and $j'$ must appear exactly $r'$ times). Thus, we have
\begin{align}
& \binom{d+m}{m+1}^{-1} \cdot \frac{m+1}{d} \cdot \sum_{r,r'=1}^m \sum_{j \neq j' \in [d]} |T^{m}_{(j,r+1),(j',r')}| \nonumber\\ 
 &= \binom{d+m}{m+1}^{-1} \cdot \frac{m+1}{d} \cdot \sum_{r,r'=1}^m \sum_{j \in [d]}  \binom{d+m-r-r'-4}{m-r'-r'-1}  \nonumber\\
 &= \sum_{r,r'=1}^m \binom{d+m}{m+1}^{-1} \cdot d \cdot (m+1) \cdot \binom{d+m-r-r'-4}{m-r'-r'-1} \nonumber \\
 &= \sum_{r,r'=1}^m \frac{(m+1)!(d-1)!}{(d+m)!} \cdot d \cdot (m+1) \cdot \frac{(d+m-r-r'-4)!}{(m-r-r'-1)!(d-3)!} \nonumber \\
 & = \sum_{r,r'=1}^m \frac{d(d-1)(d-2) \cdot (m+1)^2 m\cdots (m-r-r')}{(d+m)\cdots (d+m-r-r'-3)} \nonumber\\
 &\leq \frac{d(d-1)(d-2)}{d+m}\cdot \sum_{r,r'=1}^m \left(\frac{m+1}{d+m-1} \right)^{r+r'+3} \nonumber \\
&\leq  d^2 \cdot m^2 \cdot \left(\frac{m+1}{d+m-1} \right)^{5} \leq  C'' \cdot \frac{m^7}{d^3} \,,\label{eq: 24}
\end{align}
for some constant $C''>0$ independent of $m$ and $d$.
Plugging the bounds \eqref{eq: 23} and \eqref{eq: 24} into \eqref{eq: 22}, we obtain, provided $d>2 (m+1)$:
\begin{equation}
    \eqref{eq: 12} \leq 2\frac{(m+1)^4}{d^3} + C''\cdot \frac{m^7}{d^3}
\end{equation}
Finally, using the bounds we obtained for terms \eqref{eq: 11} and \eqref{eq: 12}, we have, provided $d > 2 (m+1)$:
\begin{align}
\mathbb{E}_{x\gets \{0,1\}^{n}} \Tr{\rho_x \sigma^{-\frac12} \rho_x \sigma^{-\frac12}}  \leq \frac{m+1}{d} + 2\frac{(m+1)^4}{d^3} + C''\cdot \frac{m^7}{d^3} = C' \cdot \left(\frac{m}{d} + \frac{m^7}{d^3}\right) \,,
\end{align}
for some constant $C'>0$ independent of $m$ and $d$.
Since, by Lemma \ref{lem: pgm}, the PGM achieves a success probability that is at most the square of the optimal, this gives the desired bound of Lemma \ref{lem: key} (with $C = \sqrt{C'}$).
\end{proof}

\section{Applications}
We describe two applications of quantum trapdoor functions.
\subsection{Public-key encryption with quantum public key}
\label{sec: pk with qpk}
A public-key encryption scheme with a \emph{quantum} public key has almost the same syntax as a classical public-key encryption scheme, except that the public key is a quantum state $\ket{pk}$, and we additionally require that $\ket{pk}$ be efficiently generatable given the classical secret key. In the (CPA) security game, the adversary has access to an arbitrary polynomial number of copies of $\ket{pk}$. For completeness, we provide a formal definition.

\begin{definition}[Public-key encryption with quantum public key]
\label{def: pk with qpk}
A public-key encryption scheme with quantum public key is a tuple of QPT algorithms $(\mathsf{GenSK}, \mathsf{GenPK}, \mathsf{Enc}, \mathsf{Dec})$ as follows.
\begin{itemize}
\item $\mathsf{GenSK}(1^{n}) \rightarrow sk$: Takes as input a security parameter, and outputs a classical secret key $sk$.
\item $\mathsf{GenPK}(sk) \rightarrow \ket{pk}$: Takes as input the secret key $sk$, and outputs the quantum public key $\ket{pk}$. We additionally require that $\ket{pk}$ be unique given $sk$.
\item $\mathsf{Enc}(\ket{pk}, m) \rightarrow \ket{c}$: Takes as input a copy of the public key $\ket{pk}$ and a message $m\in \{0,1\}^*$, and outputs a quantum ciphertext $\ket{c}$.
\item $\mathsf{Dec}(sk, \ket{c}) \rightarrow m$: Takes as input the secret key $sk$ and a ciphertext $\ket{c}$, and outputs a string $m$.
\end{itemize}
These algorithms should satisfy ``correctness'', i.e.\ for all $m \in \{0,1\}^*$, for all $n$, $$ \Pr[\mathsf{Dec}(sk, \ket{c}) = m :  sk \gets \mathsf{Gen}(1^{n}),  \ket{pk} \gets \mathsf{GenPK}( sk), \ket{c} \gets \mathsf{Enc}(\ket{pk}, m)] = 1 \,.$$
\end{definition}

The definition of CPA security is analogous to the classical one, except that we explicitly give the adversary access to an arbitrary polynomial number of copies of the quantum public key.
\begin{definition}[CPA security with quantum public key]
\label{def: cpa with qpk}
A public-key encryption scheme with quantum public key satisfies CPA security if the following holds. For all QPT algorithms $A_0, A_1$, for all $t = poly$, there exists a negligible function $\negl$ such that, for all $n$,
\begin{align}
\Pr[b = b': &sk \gets \mathsf{GenSK}(1^{n}),  \ket{pk} \gets \mathsf{GenPK}(sk), (m_0, m_1,\ket{aux}) \gets A_0(\ket{pk}^{\otimes t(n)}), \nonumber\\ &b \gets \{0,1\}, \ket{c} \gets \mathsf{Enc}(\ket{pk}, m_b), b' \gets A_1(\ket{aux}, \ket{c})] 
\leq \frac12 + \negl(n)  \,. \nonumber
\end{align}    
\end{definition}

\begin{comment}
The definition of CPA security is analogous to the classical one.
\begin{definition}[CPA security with quantum public key]
A public-key encryption scheme with quantum public key has CPA security if the following holds. For all QPT algorithms $A_0, A_1$, for all $t = poly$, there exists a negligible function $\negl$ such that, for all $n$,
\begin{align}
\Pr[ b = b': A(\ket{c} \otimes \ket{pk}^{\otimes t}) = m  :  \ket{c} \gets \mathsf{Enc}(\ket{pk}, m), m \gets \{0,1\}, \ket{pk} \gets \mathsf{GenPK}( sk),\, & sk \gets \mathsf{Gen}(1^{n})] \nonumber\\
&\leq \frac12 + \negl(n)  \,. \nonumber
\end{align}    
\end{definition}
\end{comment}

We show that quantum trapdoor functions imply a public-key encryption scheme with quantum public key. The construction is analogous to the construction of a (classical) public-key encryption scheme from injective trapdoor functions\footnote{Recall in particular that, for a quantum trapdoor function, the quantum map $x \rightarrow \ket{\psi_x}$ induced by a fixed evaluation key is ``injective'', in the sense that for each honestly generated $\ket{\psi_x}$ there is a unique inverse $x$.}. We start by describing a construction that supports encryptions of single-bit messages. 

Let $(\mathsf{QTF.GenTR}, \mathsf{QTF.GenEV}, \mathsf{QTF.Eval}, \mathsf{QTF.Invert})$ be a quantum trapdoor function, as in Definition \ref{def: qtf}.

\begin{comment}
\begin{theorem}
\label{thm: pk with qpk}
Assuming the existence of quantum-secure one-way functions, there exists a public-key encryption scheme with quantum public key.
\end{theorem}
\end{comment}

%We now describe the construction that makes Theorem \ref{thm: pk with qpk} true.
%As anticipated, the construction leverages a quantum trapdoor function (which we have shown can be built from a one-way function). 

\begin{construction}[Public-key encryption with quantum public key, supporting single-bit messages]
\label{const: 2}
Define $(\mathsf{GenSK}, \mathsf{GenPK}, \mathsf{Enc}, \mathsf{Dec})$ as follows:
    \begin{itemize}
    \item $\mathsf{GenSK}(1^{n}) \rightarrow sk$: Sample a trapdoor $tr \gets \mathsf{QTF.GenTR}(1^{n})$. Set $sk = tr$.
    \item $\mathsf{GenPK}(sk) \rightarrow \ket{pk}$: Let $\ket{eval} = \mathsf{QTF.GenEV}(sk)$. Set $\ket{pk} = \ket{eval}$.
    \item $\mathsf{Enc}\big(\ket{pk}, m \in \{0,1\}\big) \rightarrow \ket{c}$: Sample $r,x \gets \{0,1\}^{n}$. Compute $\ket{\psi_x} \leftarrow \mathsf{QTF.Eval}(\ket{pk}, x)$. Set $$\ket{c} = (\ket{\psi_x}, \,r,\, r \cdot x \oplus m) \,.$$
    \item $\mathsf{Dec}(sk, \ket{c}) \rightarrow m'$: Parse $\ket{c}$ as $\ket{c} = (\ket{\phi}, r, b)$. Compute $x' \gets \mathsf{QTF.Invert}(sk, \ket{\phi})$. Set $m' = r\cdot x' \oplus b$.
    \end{itemize}
\end{construction}

Correctness of this scheme clearly follows from the trapdoor property of the underlying QTF. We show the following.

\begin{theorem}
\label{thm: pk with qpk cpa security}
Construction \ref{const: 2} satisfies CPA security.
\end{theorem}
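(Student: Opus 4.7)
The plan is to reduce CPA security to the hardness of inversion for the underlying QTF, with the reduction passing through a quantum version of the Goldreich--Levin hardcore bit theorem (in the form of Adcock--Cleve), which handles quantum side information.

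\textbf{Step 1: from indistinguishability to predicting a bit.} Suppose for contradiction that there exist QPT algorithms $A_0, A_1$ and a polynomial $t$ achieving CPA distinguishing advantage $\textsf{non-negl}(n)$. Because the message bit $m$ is masked by $r \cdot x$, distinguishing $\mathsf{Enc}(\ket{pk}, 0)$ from $\mathsf{Enc}(\ket{pk}, 1)$ given $\ket{pk}^{\otimes t}$ is equivalent (by a standard hybrid / $m_0 = 0$, $m_1 = 1$ argument) to predicting the bit $r \cdot x$ from
\[
\big(\ket{\psi_x},\ \ket{pk}^{\otimes t},\ r\big),
\]
where $tr \gets \mathsf{QTF.GenTR}(1^n)$, $\ket{pk} \gets \mathsf{QTF.GenEV}(tr)$, $x, r \gets \{0,1\}^n$, and $\ket{\psi_x} \gets \mathsf{QTF.Eval}(\ket{pk}, x)$. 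Thus we obtain a QPT predictor $B$ that, on input $(\ket{\psi_x}, \ket{pk}^{\otimes t}, r)$, outputs $r \cdot x$ with probability at least $\tfrac12 + \textsf{non-negl}(n)$.

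\textbf{Step 2: Goldreich--Levin extraction with quantum advice.} We now invoke the quantum Goldreich--Levin theorem (Adcock--Cleve): if a QPT algorithm predicts $r \cdot x$ with advantage $\varepsilon$ when given $r$ together with a quantum state $\rho_x$ (a function of $x$ not depending on $r$), then there is a QPT extractor that, given $\rho_x^{\otimes \poly(1/\varepsilon)}$ (or several independent uses of the preparation circuit for $\rho_x$), outputs $x$ with probability $\poly(\varepsilon)$. In our setting the ``advice'' is $\rho_x = \ket{\psi_x}\!\bra{\psi_x} \otimes (\ket{pk}\!\bra{pk})^{\otimes t}$, which indeed does not depend on $r$. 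Applying the extractor yields a QPT algorithm $A$ that, given $\ket{\psi_x} \otimes \ket{pk}^{\otimes t'}$ for some polynomial $t'$, outputs $x$ with non-negligible probability.

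\textbf{Step 3: contradiction with QTF security.} By part (a) of Definition \ref{def: qtf}, the existence of such an $A$ contradicts the security of the underlying QTF, and hence no such $A_0, A_1$ can exist. This completes the reduction. The main technical point is Step 2: one must be careful that the quantum Goldreich--Levin reduction tolerates quantum (rather than classical) advice, and that the extractor only uses polynomially many copies of $\ket{pk}$ so that the reduction to QTF security is legitimate; both are afforded by the Adcock--Cleve-style analysis. A similar reduction can then be used to extend to multi-bit messages by encrypting bit-by-bit (or by encrypting an $\ell$-bit chunk via a length-$\ell$ hardcore function and applying a standard hybrid argument).
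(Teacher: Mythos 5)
Your reduction has exactly the same three-step structure as the paper's proof: reduce CPA-distinguishing to predicting the bit $r \cdot x$, apply the quantum Goldreich--Levin theorem to extract $x$, and contradict property (a) of Definition \ref{def: qtf}. The final conclusion (an inverter that uses one copy of $\ket{\psi_x}$ and polynomially many copies of $\ket{pk}$) is correct.

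However, the way you state the quantum Goldreich--Levin step contains an imprecision that, taken at face value, would break the reduction. You write that the extractor is given $\rho_x^{\otimes \poly(1/\varepsilon)}$, where $\rho_x = \ket{\psi_x}\!\bra{\psi_x} \otimes (\ket{pk}\!\bra{pk})^{\otimes t}$, ``or several independent uses of the preparation circuit for $\rho_x$.'' Neither option is available: $\poly(1/\varepsilon)$ copies of $\rho_x$ would require $\poly(1/\varepsilon)$ copies of $\ket{\psi_x}$, but the QTF security game (Definition \ref{def: qtf}(a)) hands the adversary \emph{exactly one} copy of $\ket{\psi_x} = \mathsf{Eval}(\ket{eval}, x)$; and the preparation circuit for $\rho_x$ applies $Z^x$, which requires knowing $x$, which is precisely what the extractor is trying to find. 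So if the extractor genuinely needed multiple copies of $\rho_x$, the step from ``extractor exists'' to ``inverter of the form $A(\ket{\psi_x}, \ket{pk}^{\otimes t'})$ exists'' would not go through. The paper's proof is explicit about why this works: the Adcock--Cleve reduction makes \emph{only a single call} to the predictor $B$, so it consumes only a single copy of the quantum advice state, and in particular only a single copy of $\ket{\psi_x}$. That single-call property is the load-bearing technical point, and your sketch obscures it rather than invoking it. I'd suggest replacing the $\rho_x^{\otimes \poly(1/\varepsilon)}$ statement with the precise single-query form of the Adcock--Cleve theorem, and pointing out that the resulting inverter therefore fits the input interface of the QTF hard-to-invert game (one copy of $\ket{\psi_x}$, plus $\ket{pk}^{\otimes t}$, where $t$ is the polynomial from the CPA adversary).
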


%Theorem \ref{thm: pk with qpk cpa security} then implies that a public-key encryption scheme with quantum public key, supporting single-bit messages, can be built from any quantum-secure one-way function.
\begin{proof}
Suppose for a contradiction that there is a QPT adversary $A$, a $t = poly$, and a non-negligible function \textsf{non-negl} such that, for all $n$, 
\begin{align}
\Pr[ A(\ket{c} \otimes \ket{pk}^{\otimes t}) = m  :  sk \gets \mathsf{Gen}(1^{n}), \ket{pk} \gets \mathsf{GenPK}( sk), m \gets \{0,1\}, \, & \ket{c} \gets \mathsf{Enc}(\ket{pk}, m) ] \nonumber\\
&\geq \frac12 + \textsf{non-negl}(n)  \,. \nonumber
\end{align} 
In the case of Construction \ref{const: 2}, this implies that 
\begin{align}
\Pr\Big[ A(\ket{\psi_x}, \,r,\, r \cdot x \oplus m, \ket{eval}^{\otimes t}) = m  :  &\,\, tr \gets \mathsf{QTF.GenTR}(1^{n}), \ket{eval} \gets \mathsf{QTF.GenEV}(tr), m \gets \{0,1\}, \nonumber \\ &r,x \gets \{0,1\}^{n}, \ket{\psi_x} \gets \mathsf{QTF.Eval}(\ket{eval}, x) \Big]
\geq \frac12 + \textsf{non-negl}(n)  \,. \label{eq: 25}
\end{align} 
It is clear that guessing $m$ is equivalent to guessing $r \cdot x$. Thus, the existence of $A$ satisfying \eqref{eq: 25} implies the existence of an algorithm $A'$ such that 
\begin{align}
\Pr\Big[ A'(\ket{\psi_x}, \,r,\, \ket{eval}^{\otimes t}) = r\cdot x:  &\,\, tr \gets \mathsf{QTF.GenTR}(1^{n}), \ket{eval} \gets \mathsf{QTF.GenEV}(tr), \nonumber \\ &r,x \gets \{0,1\}^{n}, \ket{\psi_x} \gets \mathsf{QTF.Eval}(\ket{eval}, x) \Big]
\geq \frac12 + \textsf{non-negl}(n)   \,.
\end{align} 
Invoking the quantum version of Goldreich-Levin \cite{adcock2002quantum}, which crucially works even in the presence of quantum auxiliary information (since the reduction of \cite{adcock2002quantum} only makes a single call to $A$), we obtain an adversary $A''$ that breaks security of the quantum trapdoor function, i.e.\ given as input $(\ket{\psi_x}, \ket{eval}^{\otimes t})$, $A''$ outputs $x$ with non-negligible probability.
\end{proof}

The scheme of Construction \ref{const: 2} supports encryptions of single-bit messages. In the classical setting, it is well-known that, for public-key schemes, CPA security for single-messages is equivalent to CPA security for multiple messages. Thus, a scheme that supports encryptions of single-bit messages can be bootstrapped to a scheme that supports encryptions of messages of arbitrary length by simply encrypting each bit of the message under the same public key. In the case of \emph{quantum} public key, we need to be a bit more careful. This trivial bootstrapping does not work because the encryption algorithm crucially only receives as input a \emph{single} copy of $\ket{pk}$ (this is of course not an issue classically, since the public key can be copied).

This issue can be circumvented by a slight modification of the classical bootstrapping technique that is referred to as ``hybrid encryption'': %We bootstrap our scheme to one that supports encryptions of arbitrary (polynomial) length messages in two steps:
\begin{itemize}
    \item[(i)] We first obtain a scheme that supports encryptions of $n$-bit messages by having the public key consist of $n$ copies of the public key for a single-bit scheme, i.e. $\ket{pk}^{\otimes n}$. %freshly sampled public keys (and the secret key consists of the respective secret keys). 
   Note that this alone does not resolve the issue because the size of the public-key still grows with the size of the message.
    \item[(ii)] We modify encryption as follows. To encrypt a message $m$ (of arbitrary length), use $\ket{pk}^{\otimes n}$ to encrypt a freshly sampled secret key $\tilde{sk}$ of a \emph{private-key} encryption scheme corresponding to security parameter $n$ (we are assuming this secret key is $n$-bits long), which supports encryption of messages of arbitrary length. Use $\tilde{sk}$ to encrypt $m$ (this step is entirely classical). The ciphertext consists of both encryptions.
\end{itemize}

\begin{theorem}
Assuming the existence of quantum-secure one-way functions, there exists a public-key encryption scheme with a quantum public key (as in Definition \ref{def: pk with qpk}).
\end{theorem}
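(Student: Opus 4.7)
The plan is to combine the previous results in this section with a standard ``hybrid encryption'' bootstrapping. First, by Theorem~\ref{thm: main}, the existence of quantum-secure one-way functions implies the existence of a quantum trapdoor function, and by Theorem~\ref{thm: pk with qpk cpa security}, plugging this QTF into Construction~\ref{const: 2} yields a CPA-secure public-key encryption scheme $\Pi_1$ with quantum public key that supports encryption of a single bit. Moreover, quantum-secure one-way functions imply the existence of a CPA-secure \emph{private-key} encryption scheme $\Pi_{\mathsf{sym}}$ with classical $n$-bit keys that supports messages of arbitrary length (e.g.\ via a quantum-secure PRF from \cite{zhandry2012construct}).

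I would build $\Pi$ in two steps, as sketched right before the theorem statement. First, let $\Pi_n$ be the scheme whose secret key is a tuple $(sk_1,\ldots,sk_n)$ of independent secret keys of $\Pi_1$, whose public key is $\ket{pk_1}\otimes\cdots\otimes\ket{pk_n}$, and which encrypts an $n$-bit string bit-by-bit under the corresponding single-bit public keys. Crucially, the secret key of $\Pi_n$ is still classical, and the public key is uniquely determined by $(sk_1,\ldots,sk_n)$ and efficiently generatable from it by running $\mathsf{QTF.GenEV}$ on each trapdoor, so $\Pi_n$ meets the syntactic requirements of Definition~\ref{def: pk with qpk}. CPA security of $\Pi_n$ for $n$-bit messages, in the presence of polynomially many copies of its public key, follows from CPA security of $\Pi_1$ by a standard $n$-step hybrid argument (swapping the encrypted bit in one coordinate at a time, reducing each swap to CPA security of $\Pi_1$). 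Second, to obtain $\Pi$, I encrypt a message $m$ of arbitrary length by sampling a fresh symmetric key $\tilde{sk}\gets\{0,1\}^n$ and outputting the ciphertext $\bigl(\mathsf{Enc}_{\Pi_n}(\ket{pk}^{\otimes n},\, \tilde{sk}),\; \mathsf{Enc}_{\Pi_{\mathsf{sym}}}(\tilde{sk}, m)\bigr)$, where the second component is purely classical. Decryption recovers $\tilde{sk}$ using the classical QTF trapdoors and then decrypts the symmetric ciphertext.

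Correctness is immediate from the trapdoor property of the QTF together with correctness of $\Pi_{\mathsf{sym}}$. For CPA security, I would use a two-step hybrid: first, replace the $\Pi_n$-encryption of $\tilde{sk}$ with the $\Pi_n$-encryption of $0^n$; indistinguishability follows from CPA security of $\Pi_n$, even when the adversary holds polynomially many copies of $\ket{pk}^{\otimes n}$. In the resulting hybrid $\tilde{sk}$ is information-theoretically independent of everything the adversary sees except the symmetric ciphertext, so indistinguishability between encryptions of $m_0$ and $m_1$ reduces to CPA security of $\Pi_{\mathsf{sym}}$.

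The main (and only) subtlety, and really the whole point of passing through hybrid encryption, is that the public-key encryption algorithm of Definition~\ref{def: pk with qpk} receives only a \emph{single} copy of the public key, so one cannot ``copy'' $\ket{pk}$ in order to encrypt a long message bit-by-bit under the single-bit scheme $\Pi_1$. Fixing a size-$n$ public key $\ket{pk}^{\otimes n}$ in advance suffices to encapsulate one symmetric key, whose length depends only on the security parameter and not on the message, which resolves the obstacle. No further quantum-specific complication arises, since the only quantum objects in the construction (the public keys and the QTF part of the ciphertext) are handled directly by invoking Definition~\ref{def: cpa with qpk} for $\Pi_1$ in the hybrid steps.
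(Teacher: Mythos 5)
Your proposal is correct and follows essentially the same route as the paper: both prove the theorem by first building an intermediate scheme for $n$-bit messages from $n$ single-bit public keys, then using hybrid (KEM/DEM) encryption with a symmetric scheme to handle arbitrary-length messages, with security argued by the standard two-step hybrid and the observation that the reductions are straightline and may request polynomially many copies of the public key. The only minor deviation is that your $\Pi_n$ uses $n$ \emph{independent} single-bit public keys $\ket{pk_1}\otimes\cdots\otimes\ket{pk_n}$ whereas the paper takes $n$ copies $\ket{pk}^{\otimes n}$ of one key; both variants are fine and the hybrid argument goes through the same way in either case.
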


\begin{proof}
 The proof that CPA security is preserved through steps (i) and (ii) follows essentially unchanged from the respective proofs in the classical setting. The reductions in these proofs only make straightline use of the adversaries. Moreover, these reductions are allowed to make use of an arbitrary polynomial number of copies of the public key, since the adversary in the definition of CPA security  (Definition \ref{def: cpa with qpk}) is allowed the same.
\end{proof}

\subsection{Two-message key-exchange}
\label{sec: key exchange}
In this subsection, we informally discuss the implications of the result from the previous subsection on the possibility of realizing \emph{two-message} key-exchange \emph{from one-way functions}.

Any public-key encryption scheme implies a simple two-message key-exchange protocol: (i) Alice samples $sk$ and $pk$, and sends $pk$ to Bob; (ii) Bob samples a uniformly random string $x$ and sends back an encryption of $x$, which Alice is able to decrypt. Alice and Bob set $x$ to be the shared key.

One can construct an analogous two-message key-exchange protocol from a public-key encryption scheme with a \emph{quantum} public key (like the one described in Subsection \ref{sec: pk with qpk}). The only difference is that Alice and Bob's messages are now quantum states. Then, given the results from the previous section, does this mean that quantum communication allows for a two-message key-exchange protocol from one-way functions?

The answer is a bit subtle, and depends on how one models the attacker's access to the quantum communication. In the case of the BB84 quantum key-exchange protocol, only the \emph{classical} communication channels (used by Alice and Bob to perform the steps of ``information reconciliation'' and ``privacy amplification'') are assumed to be authenticated. The quantum channel is \emph{not} assumed to be authenticated, and the attacker is free to tamper arbitrarily with the qubits that are being exchanged (and the protocol itself takes care of the fact that tampering can be detected by Alice and Bob by subsequently exchanging classical messages). Note that the BB84 protocol requires strictly more than two messages. 

If one insists on considering \emph{two-message} key-exchange protocols, then it is easy to see that some form of authenticated quantum channel is \emph{necessary}. If Alice and Bob's messages are not authenticated at all, then the attacker can simply perform a ``man-in-the-middle'' attack: it can impersonate Bob when interacting with Alice, and viceversa, resulting in the attacker sharing one key with Alice and another with Bob (and the two of them not detecting this).

Then, the question becomes: what is the right notion of \emph{authenticated quantum channel}? In the classical setting, an authenticated channel has the following properties:
\begin{itemize}
\item[(a)] Messages sent through it are guaranteed to not be tampered with, and the origin is trusted.
\item[(b)] However, messages sent through the channel are ``in the clear'', i.e.\ they can be read by an attacker.
\end{itemize}

When trying to generalize this definition to the quantum setting, we run into an obstacle: ``reading'' a quantum message is not something that is well-defined. In particular, the only way that (a) can be satisfied is if the attacker acts as the \emph{identity} on the quantum states that are sent through the channel. Then, trivially the attacker does not gain any information about these quantum states.

One non-trivial quantum analogue of the classical authenticated channel described above is the following:
\begin{itemize}
\item[(a')] Same as (a), but for quantum messages.
\item[(b')] The attacker gets an arbitrary (polynomial) number of copies of any quantum message sent through the channel.
\end{itemize}
This model is not ``realistic'' in the sense that copying a quantum message is not possible in general. However, it does reasonably capture a scenario in which the sender is generating many copies of the message, in our case the quantum public key $\ket{pk}$ (which can be generated efficiently given $sk$), and distributing them to parties in a network. Then, the receiver gets one copy of the message, while the attacker may be able to collect many other copies.

Our two-message key-exchange protocol from public-key encryption with a quantum public key is secure as long as Alice's message is sent through a quantum authenticated channel satisfying (a') and (b') (Bob's message need not be sent through an authenticated channel). The security of this key-exchange protocol reduces exactly to the security of the underlying public-key encryption scheme.

\bibliographystyle{alpha}
\bibliography{references}

\end{document}